\DeclareFontFamily{OT1}{pzc}{}
\DeclareFontShape{OT1}{pzc}{m}{it}{<-> s * [1.10] pzcmi7t}{}
\DeclareMathAlphabet{\mathpzc}{OT1}{pzc}{m}{it}
\newtheorem{theorem}{Theorem}[section]
\newtheorem{lemma}[theorem]{Lemma}
\newtheorem{proposition}[theorem]{Proposition}
\providecommand{\R}{\mathbb{R}}
\providecommand{\SO}{\mathbf{SO}}
\providecommand{\GL}{\mathbf{GL}}
\providecommand{\SE}{\mathbf{SE}}
\providecommand{\grpG}{\mathbf{G}}
\providecommand{\gothg}{\mathfrak{g}}
\providecommand{\gothX}{\mathfrak{X}} % as in X(M)
\providecommand{\so}{\mathfrak{so}}
\providecommand{\calE}{\mathcal{E}}
\providecommand{\calL}{\mathcal{L}}
\providecommand{\calM}{\mathcal{M}}
\providecommand{\calN}{\mathcal{N}}
\providecommand{\calP}{\mathcal{P}}
\providecommand{\vecL}{\mathbb{L}}
\providecommand{\tT}{\mathrm{T}} % tangent maps eg $\tT \calM$
\providecommand{\Id}{I} % identity of a matrix group.
\DeclareMathOperator{\tr}{tr}
\DeclareMathOperator{\diag}{diag}
\DeclareMathOperator{\Ad}{Ad}
\DeclareMathOperator{\ad}{ad}
\providecommand{\td}{\mathrm{d}}
\providecommand{\tD}{\mathrm{D}}
\providecommand{\tL}{\mathrm{L}}
\providecommand{\ddt}{\frac{\td}{\td t}}
\providecommand{\scirc}{%
    \hbox{\fontfamily{\rmdefault}\fontsize{0.4\dimexpr(\f@size pt)}{0}\selectfont{\raisebox{-0.52ex}[0ex][-0.52ex]{$\circ$}}}}
\mathchardef\mhyphen="2D
\begin{document}

\title{Exploiting Equivariance in the Design of Tracking Controllers for Euler-Poincare Systems on Matrix Lie Groups}
\headertitle{Exploiting Equivariance in the Design of Tracking Controllers for Euler-Poincare Systems on Matrix Lie Groups}

\author{
\href{https://orcid.org/0000-0001-2345-6789}{\includegraphics[scale=0.06]{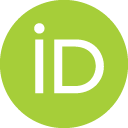}\hspace{1mm}
Matthew Hampsey}
\\
	Systems Theory and Robotics Group \\
	Australian National University \\
    ACT, 2601, Australia \\
	\texttt{matthew.hampsey@anu.edu.au} \\
	\And	\href{https://orcid.org/0000-0001-2345-6789}{\includegraphics[scale=0.06]{orcid.png}\hspace{1mm}
Pieter van Goor}
\\
	Systems Theory and Robotics Group \\
	Australian National University \\
    ACT, 2601, Australia \\
	\texttt{pieter.vangoor@anu.edu.au} \\
    \And	\href{https://orcid.org/0000-0001-2345-6789}{\includegraphics[scale=0.06]{orcid.png}\hspace{1mm}
Ravi Banavar}
    \\
        Interdisciplinary Programme in Systems and\\ Control Engineering \\
        Indian Institute of Technology Bombay \\
        India \\
        \texttt{banavar@iitb.ac.in} \\
\And	\href{https://orcid.org/0000-0001-2345-6789}{\includegraphics[scale=0.06]{orcid.png}\hspace{1mm}
Robert Mahony}
\\
	Systems Theory and Robotics Group \\
	Australian National University \\
	ACT, 2601, Australia \\
	\texttt{robert.mahony@anu.edu.au} \\
}

\maketitle

\thispagestyle{empty}
\pagestyle{empty}

\begin{abstract}                % Abstract of not more than 250 words.
The trajectory tracking problem is a fundamental control task in the study of mechanical systems.
A key construction in tracking control is the error or difference between an actual and desired trajectory.
This construction also lies at the heart of observer design and recent advances in the study of equivariant systems have provided a template for global error construction that exploits the symmetry structure of a group action if such a structure exists.
Hamiltonian systems are posed on the cotangent bundle of configuration space of a mechanical system and symmetries for the full cotangent bundle are not commonly used in geometric control theory.
In this paper, we propose a group structure on the cotangent bundle of a Lie-group and leverage this to define momentum and configuration errors for trajectory tracking drawing on recent work on equivariant observer design.
We show that this error definition leads to error dynamics that are themselves ``Euler-Poincare like'' and use these to derive simple, almost global trajectory tracking control for fully-actuated Euler-Poincare systems on a Lie-group state space.
\end{abstract}

%
%       We show that any smooth trajectory for such a system admits an intrinsic global error, the dynamics of which can be linearised within a single coordinate chart.
%       We use this construction and the resultant local coordinates for LQR tracking control, a framework we term the Equivariant Regulator (EqR).

%===============================================================================
%%%%%%%%%%%%%%%%%%%%%%%%%%%%%%%%%%%%%%%%%%%%%%%%%%%%%%%%%%%%%%%%%%%%%%%%%%%%%%%%%%%%%%%%%%%%%%%%%%
\section{Introduction}
%%%%%%%%%%%%%%%%%%%%%%%%%%%%%%%%%%%%%%%%%%%%%%%%%%%%%%%%%%%%%%%%%%%%%%%%%%%%%%%%%%%%%%%%%%%%%%%%%%
The proliferation of small autonomous systems in modern society has emphasised the need for high performance tracking control for mechanical systems.
%Models that exploit the underlying geometry of the mechanical system have been shown to lead to high performance and robust control.
There is a rich history of control design for this problem.
Early approaches largely focused on the control of robotic manipulators (\cite{SLOTINE1989509,10.1115/1.3139651}) leading to a range of robust and adaptive controllers that are now well established in the literature.
Passivity-based control of mechanical systems was developed in the 90s and into the 2000s.
The landmark paper (\cite{schaftL2gainPassivityTechniques1996}) exploited the passivity property for the stabilisation of Hamiltonian systems.
In (\cite{ortegaEnergyshapingPortcontrolledHamiltonian1999}), passivity-based control is extended from mechanical systems to more general port-Hamiltonian systems with dissipation.
This result was extended by \cite{fujimotoTrajectoryTrackingControl2003} to the more general tracking problem for mechanical and port-Hamiltonian systems by the choice of canonical coordinate transformation.
A key insight in this work is the selection of local error coordinates that are themselves Hamiltonian and also preserve required passivity conditions.
In \cite{mahonyNovelPassivitybasedTrajectory2019}, a passivity-based tracking control for general conservative mechanical systems was proposed where the reference trajectory is used to parameterise a single-coordinate Hamiltonian system that compensates for energy in the tracking error.
All these control designs depend explicitly on a choice of generalised coordinates in which the Euler-Lagrange equations or the equivalent Hamiltonian system on $\R^{2n}$ can be locally formulated.

Parallel to the above development, there has also been considerable work on coordinate-free control design for mechanical systems.
These works require additional geometric structure to be present: either a symmetry group acting on the state-space (typically the system is posed directly on a Lie-group) or the presence of a metric.
In (\cite{blochControlledLagrangiansStabilization2000}), the method of controlled Lagrangians is introduced in order to asymptotically stabilise mechanical systems with symmetry.
In (\cite{bulloTrackingFullyActuated1999a}), given an error function acting on a manifold and a Riemannian metric, the associated transport map is used to construct velocity error and design a stabilising control.
Later, in the 2010's, the consumerisation of UAV technology and the desire to control these vehicles through a wide range of flight modalities called for an intrinsic approach to control.
In (\cite{cabecinhasOutputfeedbackControlAlmost2008}, \cite{leeGeometricTrackingControl2010}), almost-global approaches to quadrotor tracking control are proposed by utilising the natural Lie group structure present on $\SO(3)$.

More recently, there have been significant advances in observer design for systems with symmetry, motivated largely by the requirement for state estimation for aerial robotics.
In (\cite{mahonyNonlinearComplementaryFilters2008}), the $\SO(3)$ symmetry is used to develop an almost-globally stable attitude observer.
In (\cite{barrauInvariantExtendedKalman2016}), the Lie group $\SE_2(3)$ is introduced for inertial navigation filtering and it is shown that for ``group-affine'' systems, the correct choice of symmetry leads to an exact linearisation of the error dynamics.
In (\cite{vangoorEquivariantFilterEqF2023}), a filter design for systems with transitive symmetries is proposed.
Equivariance of the system output is shown to lead to reduced linearisation error.
A common theme in these approaches is the fact that choosing the \textit{correct} symmetry, motivated by equivariance, appears to lead simplified error dynamics and improved observer performance.

In this paper,  we consider the trajectory tracking control problem for fully-actuated mechanical systems whose configuration space is a matrix Lie group.
In this case, for constant inertia, the system dynamics are given by the Euler-Poincare equations (\cite{holmEulerPoincareEquations1998}).
The key to our approach is extending the natural symmetry on the Lie-group configuration space to a transitive symmetry on the system phase space associated with a particular semi-direct-product Lie-group  (\cite{jayaramanBlackScholesTheoryDiffusion2020}, \cite{holmEulerPoincareEquations1998}).
We observe that the Euler-Poincare equations are a sub-behaviour of a larger system where the velocity input is a free input.
For this non-physical system we define an input action, acting on velocity and force inputs, and show that the extended Euler-Poincare equations are equivariant with respect to the group action defined earlier.
This structure allows us to employ the tools developed in the equivariant observer field (\cite{mahonyEquivariantSystemsTheory2020}, \cite{geEquivariantFilterDesign2022}, \cite{vangoorEquivariantFilterEqF2023}) to define an equivariant tracking error.
The resultant error dynamics are nice and we use these error coordinates to define a Lyapunov function.
Using the true velocity as the free velocity input recovers the physical system dynamics and the torque input can now be designed to ensure a global decrease condition on the Lyapunov function leading to a novel trajectory tracking controller design.

%-----------------------------------------------------------------
\section{Preliminaries}
%-----------------------------------------------------------------
In this section we introduce some notation and mathematical identities that will be used throughout the paper.
For a detailed overview, we refer the reader to (\cite{leeSmoothManifolds2003}).

\subsection{Notation}

%Let $\vecV$ and $\vecW$ be vector spaces. For a function $g : \vecV \to \vecW$, the notation $g[v]$ will be used to indicate that $g$ is a linear function of $\vecV$.

Let $\calM$ and $\calN$ denote smooth manifolds.
For an arbitrary point $Q \in \calM$, denote the tangent space of $\calM$ at $Q$ by $\tT_{Q}\calM$.
For a smooth function $h : \calM \to \calN$ the notation
\begin{align*}
    \tD _{Q|\zeta} h(Q):\tT_{\zeta}\calM\to \tT_{h(\zeta)}\calN \\
    \delta \mapsto \tD_{Q|\zeta}h(Q)[\delta]
\end{align*}
denotes the differential of $h(Q)$ evaluated at $Q = \zeta$ in the direction $\delta \in \tT_{\zeta}\calM$.
When the basepoint and argument are implied the notation $\tD h$ will also be used for simplicity.

For a function with multiple arguments $h : \calM \times \calN \to \calP$,
the notation $\tD_1 h : \tT \calM\to \tT \calP$ and $\tD_2 h: \tT \calN \to \tT \calP$
are used to denote the derivatives with respect to the first and second argument, respectively.
The space of smooth vector fields on $\calM$ is denoted with $\gothX(\calM)$.

Let $\grpG$ denote an m-dimensional matrix Lie group and denote the identity element with $\Id$.
The \emph{Lie algebra}, $\gothg$ of $\grpG$, is identified with the tangent space of $\grpG$ at identity, $\gothg\simeq \tT_{\Id}\grpG  \subset \R^{m \times m}$.
Given arbitrary $X \in \grpG$, \emph{left translation by $X$} is defined by
\mbox{$\mathrm L_X : \grpG \to \grpG$}, \mbox{$\mathrm L_X(Y) = XY.$}
This induces a corresponding left translation on $\gothg$,
\mbox{$\tD \mathrm L_X : \gothg \to \tT_X \grpG,$}
defined by $\tD \mathrm L_{X} U = XU$.
Similarly, \emph{right translation by $X$} is defined by
\mbox{$\mathrm R_X : \grpG \to \grpG$}, \mbox{$\mathrm R_X(Y) = YX.$}
This also induces a corresponding right translation on $\gothg$,
\mbox{$\tD \mathrm R_X : \gothg \to \tT_X \grpG$}, defined by $\tD \mathrm R_{X}U = UX$.
Given $X \in \grpG$, the adjoint map \mbox{$\Ad_X: \gothg \to \gothg$} is defined by \mbox{$\Ad_X(U) = X U X^{-1}$}.
Given $u \in \gothg$, the ``little'' adjoint map \mbox{$\ad_u: \gothg \to \gothg$} is defined by \mbox{$\ad_u v = uv - vu$}.

Let $\grpG$ be a Lie group and $\calM$ a smooth manifold.
A right action is a smooth function $\phi: \grpG \times \calM \to \calM$ satisfying the \emph{identity} and \emph{compatibility } properties:
\begin{align*}
    \phi(\Id, Q)        & = Q           \\
    \phi(Y, \phi(X, Q)) & = \phi(XY, Q)
\end{align*}
for all $Q \in \calM$ and $X \in \grpG$.

For a vector space $V$, let $V^\ast$ denote the dual space of $V$; that is, the set of linear functionals acting on $V$.
If $V, W$ are vector spaces and $A : V \to W$ is a linear map, then $A^\ast$ denotes the dual map $A^\ast : W^\ast \to V^\ast$ defined by
\begin{align*}
    (A^\ast p)(v) = p(Av),
\end{align*}
for all $p \in W^\ast$, $v \in V$.

The Frobenius inner product is an inner product on $\R^{m \times m}$, defined by
\begin{align*}
        \langle A, B \rangle = \tr(A^\top B).
\end{align*}
Note that $\langle A, BC \rangle= \langle B^\top A, C \rangle$.
Given an inner product space $W$ and a subspace $V \subseteq W$, let $V^\perp$ denote the orthogonal subspace, defined by 
\begin{align*}
    V^\perp = \left\{ w \in W : \langle v, w \rangle = 0 \; \forall v \in V  \right\}.    
\end{align*}
Every vector $w \in W$ can be expressed as a sum $w = u + v$, with $v \in V$ and $u \in V^\perp$ (\cite{axler1997linear}).
Then the subspace projection operator $\mathbb{P}_V$ can be defined by $\mathbb{P}_V w = v$.

\subsection{Useful identities}

The following identities will be useful later in this document; we state them without proof.
If $X = X(t)$ is a trajectory on a matrix Lie group $\grpG$ with derivative $\dot{X} = X U$, with $U = U(t) \in \gothg$ then one can compute the derivative
\begin{align}
    \ddt \Ad_{X} &= \Ad_{X} \ad_{U} = \ad_{\Ad_{X}U} \Ad_{X}  \label{eq:Ad_deriv}.
 \end{align}
 An immediate corollary is that one also has the following identities
 \begin{align}
    \ddt &\Ad_{X}^\ast  = \ad_{U}^\ast \Ad_{X}^\ast = \Ad_{X}^\ast \ad_{\Ad_{X}U}^\ast  \label{eq:Ad_star_deriv}\\
   \ddt &\Ad_{X^{-1}}  = - \ad_{U} \Ad_{X^{-1}} = - \Ad_{X^{-1}} \ad_{\Ad_{X}U} \label{eq:Ad_inv_deriv}\\
    \ddt &\Ad^\ast_{X^{-1}} = -\Ad_{X^{-1}}^\ast \ad_U^\ast = - \ad_{\Ad_{X}U}^\ast  \Ad_{X^{-1}}^\ast. \label{eq:Ad_inv_star_deriv}
\end{align}

\subsection{The Lie Group $\SO(3)$}
%-----------------------------------------------------------------

The Lie group $\SO(3)$ is defined by the set of matrices
\begin{align*}
   \SO(3) = \left\{ R \in \GL(3) : R^\top R = R R^\top = \Id_3, \det(R) = 1 \right\}.
\end{align*}
The associated Lie algebra $\so(3)$ is defined by
\begin{align*}
   \so(3) = \left\{ \Omega^\times \in \R^{3 \times 3}: \Omega \in \R^3 \right\},
\end{align*}
where the skew-symmetric map $( \cdot )^\times : \R^3 \to \so(3)$ maps
\begin{align*}
    \begin{pmatrix}
        v_1 \\ v_2 \\ v_3
    \end{pmatrix} \mapsto \begin{pmatrix}
    0 & -v_3 & v_2\\
    v_2 & 0 & -v_1\\
    -v_2 & v_1 & 0
\end{pmatrix}.
\end{align*}

There is an isomorphism between $\so(3)$ and $\R^3$ by identifying $\Omega^\times \cong \Omega$.
Denote the corresponding inverse map by
\begin{align*}
    (\cdot)^\vee : \so(3) \to \R^3.
\end{align*}
The set of linear functionals $\so^*(3) : \so(3) \to \R$ can be identified with $\R^3$ by identifying $P \in \so^\ast(3)$ with the vector $p \in \R^3$ such that $P(v^\times) = p^\top v$ for all $v \in \R^3$.
Using the fact that $\Ad_R \Omega^\times = (R \Omega)^\times$ and $(v \times u)^\times = v^\times u^\times - u^\times v^\times$, we also have the following identifications (\cite{marsden1994introduction})
\begin{align}
  \Ad_R \Omega^\times &\cong R \Omega, & \Ad_R^\ast P &\cong R^\top p \notag\\
  \ad_{v} u &\cong v \times u, & \ad_{v}^\ast P &\cong - v \times p \label{eq:r3_ident}.
\end{align}
The skew-symmetric component of a matrix $A \in \R^{3 \times 3}$ is given by the projection $\mathbb{P}_{\so(3)}(A)$, which can be explicitly computed to be given by
\begin{align}
    \mathbb{P}_{\so(3)}(A) = \frac{1}{2}(A - A^\top) \label{eq:so3_proj}.
\end{align}

%-----------------------------------------------------------------
\section{Group actions and Equivariance for Extended Euler-Poincare Equations}
%-----------------------------------------------------------------
%-----------------------------------------------------------------
\subsection{Extended Euler-Poincare equations}
%-----------------------------------------------------------------

Let $\grpG$ be a matrix Lie group and let $\tT^\ast \grpG$ denote the cotangent bundle of $\grpG$.
Let $Q \in \grpG$ and $P \in \gothg^\ast$ denote elements of the group and the Lie coalgebra, respectively.
The cotangent bundle can be left trivialised by the correspondence
$\tD \tL_{Q}^\ast : \tT^\ast_Q \grpG \to \gothg^\ast$.
Thus, an element $(Q,P) \in \grpG \times \gothg^\ast$ corresponds to an element $(Q, \tD \tL_{Q^{-1}}^\ast [P]) \in \tT^\ast_Q \grpG$.

Consider the \emph{system}
\begin{subequations}\label{eq:dynamics}
\begin{align}
\dot{Q} &= Q U \\
\dot{P} &= \ad_U^\ast P + \tau,
\end{align}
\end{subequations}
defined for state  $(Q, P) \in \grpG \times \gothg^\ast$ and inputs $U, \tau \in \vecL \coloneqq \gothg \times \gothg^\ast$.
Define a system function
$f : \vecL \to \gothX (\grpG \times \gothg^\ast)$ by
\begin{align}
f_{(U,\tau)}(Q,P) = f((Q,P),(U,\tau)) := (Q U , \ad_U^\ast P + \tau) \label{eq:f_def}
\end{align}
such that \eqref{eq:dynamics} can be expressed as $(\dot{Q},\dot{P}) = f_{(U,\tau)}(Q,P)$.
We will term Equations \eqref{eq:dynamics} the \emph{extended Euler-Poincare} equations as discussed below.

The extended Euler-Poincare equations \eqref{eq:dynamics} are closely related to the classical Euler-Poincare equations.
Let $\calL : \tT\grpG \rightarrow \R$ be a left-invariant Lagrangian and let $\ell: \gothg \to \R$ be its restriction to $\gothg$.
Consider the constraint $P = \tD \ell (U)  \in \gothg^\ast$ applied to the free input $U \in \gothg$ in \eqref{eq:dynamics}.
With this constraint then \eqref{eq:dynamics} are the (forced) Euler-Poincare equations (\cite{holmEulerPoincareEquations1998}).
That is, the physical dynamics associated with the Lagrangian $\calL$ forced by $\tau$ is a sub-behaviour of the general dynamics encoded by \eqref{eq:dynamics} obtained when the additional constraint $P = \tD \ell (U)$ is enforced.
The more general dynamics defined by \eqref{eq:dynamics} are important since the equivariant theory that we employ is agnostic to the physical constraint between momentum and velocity.

In practice, in this paper we will use a quadratic left trivialised Lagrangian $\ell = \langle U, \mathbb{I} U \rangle$ where $\mathbb{I} : \gothg \to \gothg^*$ is a positive definite operator and $\langle \cdot, \cdot \rangle$ is the natural contraction.
We ensure that only physical trajectories are considered by choosing $U = \mathbb{I}^{-1} P $ as the velocity input at the appropriate point in the control design.
%In this paper, we will refer to \eqref{eq:dynamics} as the ``extended Euler-Poincare'' equations.
%, even though in the literature the term Euleris generally reserved for these equations with the constrained momentum.

%\todo{The Fiber derivative for $l$ can be defined in the obvious manner: $\mathbb{F}l(u) [w]= \mathbb{F}L(I, u) [w]$.
%Then it is clear that $\frac{\partial l}{\partial u} = \mathbb{F}l(u)$ and so $\frac{\partial l}{\partial u}$ is the momentum corresponding to the velocity $u$.
%The corresponding forced equation is given by
%\begin{align*}
%    (\frac{\td}{\td t} - \ad_u^\ast) \frac{\partial l}{\partial u} = \tau,
%\end{align*}
%with $\tau \in \tT^\ast \grpG$.
%

% Now, consider the following system on $\grpG \times \gothg^\ast$ with input space $\vecL = \gothg \times \gothg^\ast$:
% \begin{subequations}
% \begin{align}
% \dot{Q} &= QU\\
% \dot{P} &= \ad_U^\ast P + \tau,
% \end{align} \label{eq:dynamics}
% \end{subequations}
% where $(Q, P) \in \grpG \times \gothg^\ast$ and $(U, \tau) \in \vecL$.

\subsection{Symmetry and group action}
\label{sec:symmetry}

%Let $\grpG$ be a Lie group and let $\gothg^\ast$ be the dual of its Lie algebra.
We wish to study $\grpG \times \gothg^\ast$ as a group in its own right.
%There are several natural ways of extending the group structure of $\grpG$ to $\grpG \times \gothg^\ast $.
We choose to imbue $\grpG \times \gothg^\ast$ with the semidirect group structure $\grpG^\ast_{\ltimes} \coloneqq \grpG \ltimes \gothg^\ast$ given by the group product (\cite{holmEulerPoincareEquations1998,jayaramanBlackScholesTheoryDiffusion2020})
\begin{align}
    (Q_1, P_1)(Q_2, P_2) = (Q_1 Q_2, \Ad_{Q_2}^\ast P_1 + P_2) \label{eq:semidirect},
\end{align}
for $(Q_1, P_1), (Q_2, P_2) \in \grpG^\ast_{\ltimes}$.
One can verify that the inverse  $(X, P)^{-1}$ is given by $(X^{-1}, -\Ad_{X^{-1}}^\ast P)$.

Although strictly speaking the system state space is $\grpG \times \gothg^\ast$ we will identify the state-space with the group and write $(Q,P) \in \grpG^\ast_\ltimes$.
With this identification a natural right group action $\phi : \grpG^\ast_{\ltimes} \times \grpG^\ast_{\ltimes} \to \grpG^\ast_{\ltimes}$ is given by canonical right multiplication
\begin{align}
    \phi((Q_1, P_1), (Q_2, P_2)) &\coloneqq (Q_2, P_2)(Q_1, P_1) \notag\\
                                 &= (Q_2 Q_1, \Ad_{Q_1}^\ast P_2 + P_1)\notag \\
                                 &= (\phi^Q, \phi^P), \label{eq:phi_def}
\end{align}
where we have defined the component maps
% the $\grpG$ and $\gothg^\ast$ component of this map by $\phi^Q((Q_1, P_1), (Q_2, P_2))$ and $\phi^P((Q_1, P_1), (Q_2, P_2))$, respectively, so that
\begin{align*}
    \phi^Q((Q_1, P_1), (Q_2, P_2)) &\coloneqq Q_2 Q_1 \in \grpG\\
    \phi^P((Q_1, P_1), (Q_2, P_2)) &\coloneqq \Ad_{Q_1}^\ast P_2 + P_1 \in \gothg^\ast.
\end{align*}

We also consider a group action on the input space $\psi: \grpG^\ast_{\ltimes} \times \vecL \to \vecL$ defined by
\begin{align}
    \psi((X,& P), (U, \tau)) \notag\\
        &\quad = \left( \psi^U \left( (X, P), (U, \tau) \right), \psi^\tau\left((X, P), (U, \tau) \right) \right) \notag\\
        &\quad \coloneqq \left( \Ad_{X^{-1}} U, \quad \Ad_{X}^\ast \tau - \ad_{\Ad_{X^{-1}} U}^\ast P \right). \label{eq:psi_def}
\end{align}
The first component $\psi^U$ is like a coordinate transformation of the velocity $U \in \gothg$ from reference to body coordinates.
The first part of the second component $\psi^\tau$ plays a similar role while the second part contains a coupling term closely related to coriolis force.
There is no requirement that the new velocity $\Ad_{X^{-1}} U$ is related to momentum in any way.
This is a purely algebraic construction associated with equivariance of the extended Euler-Poincare equations as an abstract system.

Note that not only is the group action $\phi$ different from the original canonical left action associated with the classical Euler-Poincare equations, but the symmetry captures the geometry of the controlled system on the full cotangent bundle.
The key observation is that the original symmetry models invariance of the Lagrangian of a physical system whereas the new geometry models equivariance of the extended Euler-Poincare equations.
%This is clear in the control input action that acts on both the generalised force and the velocity input that is only free for the extended Euler-Poincare equations.

%-----------------------------------------------------------------
\subsection{Equivariance}
\label{sec:equivariance}
%-----------------------------------------------------------------

A key property of the symmetry \eqref{eq:phi_def} and \eqref{eq:psi_def} is that there is a natural equivariant structure when one considers the dynamics \eqref{eq:dynamics}.
This will be important because it implies the existence of error dynamics with the same structure as $f$ (\cite{mahonyEquivariantSystemsTheory2020}).

\begin{lemma}
\label{lem:lemma_eq_d2}
Given the semidirect product group $\grpG^\ast_{\ltimes}$, consider the system function \eqref{eq:f_def} that assigns the vector field.
Then $f$ is equivariant with respect to the group action $\phi$ \eqref{eq:phi_def} and the input action $\psi$ \eqref{eq:psi_def} in the sense that
\begin{align*}
\tD \phi_{(X_Q,X_P)}f (Q,P) & = f((Q,P),(U,\tau))  \\
& = f(\phi_{(X_Q,X_P)}(Q,P),\psi_{(X_Q,X_P)}(U,\tau)),
\end{align*}
for all $X = (X_Q, X_P) \in \grpG_\ltimes^\ast$.
\end{lemma}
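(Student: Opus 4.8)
The plan is to verify the pushforward (equivariance) identity
\begin{align*}
\tD\phi_{(X_Q,X_P)}\big[f_{(U,\tau)}(Q,P)\big] = f_{\psi_{(X_Q,X_P)}(U,\tau)}\big(\phi_{(X_Q,X_P)}(Q,P)\big)
\end{align*}
by a direct calculation, checking the two sides componentwise on the group factor $\grpG$ and the coalgebra factor $\gothg^\ast$ separately. First I would record the explicit form of the state transformation: from the semidirect product \eqref{eq:semidirect} and the right action \eqref{eq:phi_def}, the partial map $\phi_{(X_Q,X_P)}$ sends $(Q,P)$ to $(QX_Q,\, \Ad_{X_Q}^\ast P + X_P)$. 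Since the group component is simply right multiplication by $X_Q$ and the coalgebra component is affine, depending only on $P$ with $X_Q, X_P$ held fixed, its differential acting on a tangent vector $(\dot Q, \dot P) \in \tT_{(Q,P)}(\grpG \times \gothg^\ast)$ is $(\dot Q X_Q,\, \Ad_{X_Q}^\ast \dot P)$. This reduces the whole claim to two algebraic identities, one in $\tT\grpG$ and one in $\gothg^\ast$.

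Next I would evaluate both sides. Applying the differential above to $f_{(U,\tau)}(Q,P) = (QU,\, \ad_U^\ast P + \tau)$ from \eqref{eq:f_def} gives the left-hand side $\big(QUX_Q,\; \Ad_{X_Q}^\ast(\ad_U^\ast P + \tau)\big)$. For the right-hand side I would substitute the transformed state $(QX_Q,\, \Ad_{X_Q}^\ast P + X_P)$ together with the transformed input $\psi_{(X_Q,X_P)}(U,\tau) = (\Ad_{X_Q^{-1}}U,\, \Ad_{X_Q}^\ast \tau - \ad_{\Ad_{X_Q^{-1}}U}^\ast X_P)$ from \eqref{eq:psi_def} into $f$. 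The group component is then $QX_Q \cdot \Ad_{X_Q^{-1}}U = QX_Q X_Q^{-1} U X_Q = QUX_Q$, which matches the left-hand side immediately.

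The remaining work, and the only genuinely delicate step, is matching the coalgebra components. Writing $\bar U = \Ad_{X_Q^{-1}}U$, the coalgebra component of the right-hand side is
\begin{align*}
\ad_{\bar U}^\ast\big(\Ad_{X_Q}^\ast P + X_P\big) + \Ad_{X_Q}^\ast \tau - \ad_{\bar U}^\ast X_P = \ad_{\bar U}^\ast \Ad_{X_Q}^\ast P + \Ad_{X_Q}^\ast \tau,
\end{align*}
where the two terms involving $X_P$ cancel; this cancellation is precisely the purpose of the coupling term $-\ad_{\Ad_{X^{-1}}U}^\ast P$ built into $\psi^\tau$ in \eqref{eq:psi_def}. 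It then remains to show $\ad_{\bar U}^\ast \Ad_{X_Q}^\ast P = \Ad_{X_Q}^\ast \ad_U^\ast P$, that is, $\ad_{\Ad_{X_Q^{-1}}U}^\ast \Ad_{X_Q}^\ast = \Ad_{X_Q}^\ast \ad_U^\ast$. I would obtain this from the dual identity $\ad_U^\ast \Ad_X^\ast = \Ad_X^\ast \ad_{\Ad_X U}^\ast$ recorded in \eqref{eq:Ad_star_deriv}, specialising to $X = X_Q^{-1}$ and using that $\Ad$ is a homomorphism, so that $\Ad_{X_Q}^\ast \Ad_{X_Q^{-1}}^\ast = \id$; conjugating through then yields the required equality and completes the match. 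I expect this conjugation-and-cancellation step to be the main obstacle, since it is where the specific algebraic form of the input action $\psi$ is forced, whereas the rest is routine bookkeeping.
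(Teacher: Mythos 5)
Your proposal is correct and follows essentially the same route as the paper's proof: compute the differential $\tD\phi_{(X_Q,X_P)}(Q,P)[\dot Q,\dot P]=(\dot Q X_Q,\Ad_{X_Q}^\ast\dot P)$, push $f$ forward, and match components using the operator identity $\ad_{\Ad_{X_Q^{-1}}U}^\ast\Ad_{X_Q}^\ast=\Ad_{X_Q}^\ast\ad_U^\ast$ from \eqref{eq:Ad_star_deriv}, with the $X_P$ terms cancelling exactly as in the paper's add-and-subtract step in \eqref{eq:equiv_1}. The only cosmetic difference is that you work from both sides toward the middle while the paper rewrites the left-hand side forward into the form $f(\phi,\psi)$; the algebra is identical.
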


\begin{proof}
    The differential $\tD \phi_{(X_Q,X_P)}$ in the direction $(Q V, W)$ can be computed to be
    \begin{align*}
        \tD \phi_{(X_Q, X_P)}& (Q, P)[Q V, W] \\
                & = \frac{\td}{\td s}\bigg\rvert_{s=0} \phi((X_Q, X_P), (Q e^{Vs}, Ws + P))\\
                             &= (Q V X_Q, \Ad_{X_Q}^\ast  W).
    \end{align*}

    As such, $\tD\phi$ transforms the dynamics $f$ by
\begin{align}
\tD &\phi_{(X_Q, P)}(Q, P) \left[  f((Q, P), (U, \tau)) \right]\notag \\
                    %&= \tD \phi_{(X_Q, P)}(Q, P) \left[ Q U, \ad_U^\ast  P + \tau  \right]\notag\\
                    &= (Q U X_Q, \Ad_{X_Q}^\ast  ( \ad_U^\ast  P + \tau)) \notag\\
                    %&= (Q X_Q X_Q^{-1} U X_Q, \Ad_{X_Q}^\ast  \tau +  \ad_{\Ad_{X_Q}^{-1}U}^\ast  \Ad_{X_Q}^\ast  P \notag\\
                    %&= (Q X_Q \Ad_{X_Q^{-1}} U , \Ad_{X_Q}^\ast  \tau - \ad_{\Ad_{X_Q^{-1}} U}^\ast  X_P\\
                    %& \qquad \qquad + \ad_{\Ad_{X_Q^{-1}} U}^\ast  X_P - \ad_{\Ad_{X_Q^{-1}} U}^\ast  X_P) \\
                    &= (Q X_Q \Ad_{X_Q^{-1}} U, ( \Ad_{X_Q}^\ast  \tau - \ad_{\Ad_{X_Q^{-1}} U}^\ast  X_P) \notag\\
                    & \qquad \qquad + \ad_{\Ad_{X_Q^{-1}} U}^\ast  (\Ad_{X_Q}^\ast  P + X_P)) \label{eq:equiv_1}\\
                    &= (\phi^Q((X_Q, X_P), (Q, P)) \psi^U((X_Q, X_P), (U, \tau)), \notag\\
                    & \quad \quad  \ad_{\psi^U((X_Q, X_P), (U, \tau))}^\ast  \phi^P((X_Q, X_P), (Q, P))) \notag\\
                    & \quad \quad + \psi^\tau((X_Q, X_P), \tau )\label{eq:equiv_2}\\
                    &= f(\phi((X_Q, X_P), (Q, P)), \psi((X_Q, X_P), (U, \tau))) \label{eq:equiv_3},
                    %&= (\phi^Q(A, X) \psi^U(A, u), -\ad_{\psi^U(A, u)}^\ast  ( \Ad_{A^{-1}}^\ast  ( P - b)) + \Ad_{A^{-1}}^\ast  \tau - -\ad_{\psi^U(A, u)}^\ast  ( \Ad_{A^{-1}}^\ast  b)
\end{align}
where \eqref{eq:equiv_1} follows from \eqref{eq:Ad_star_deriv}, \eqref{eq:equiv_2} follows from the definitions of $\psi$ and $\phi$ and \eqref{eq:equiv_3} follows from the definition of $f$.
\end{proof}

\section{Control Error}
%-----------------------------------------------------------------

The main contribution of the paper is the definition of a novel globally defined tracking error.
In this section, we exploit the symmetry introduced in Section~\ref{sec:symmetry} to show that corresponding error dynamics are ``nice'' in the sense that they are aso extended Euler-Poincare equations.

Let $(Q(t), P(t)), (Q_d(t), P_d(t)) \in  \grpG^\ast _{\ltimes}$ be the system and desired trajectory, respectively, written as trajectories on $\grpG^\ast _{\ltimes}$.
Define the error trajectory to be
\begin{align}
    e &= \phi((Q_d, P_d)^{-1}, (Q, P)) \notag\\
      %&= \phi((Q_d^{-1}, -\Ad_{Q_d^{-1}}^\ast  P_d), (Q, P))\\
      %&= (Q, P)(Q_d^{-1}, -\Ad_{Q_d^{-1}}^\ast  P_d)\\
      &= (Q Q_d^{-1}, \Ad_{Q_d^{-1}}^\ast  (P - P_d)) =: (Q_E, P_E) \in \grpG^\ast _{\ltimes}. \label{eq:e_def}
\end{align}
In contrast to the standard formulation of tracking problems for mechanical systems where error is typically formulated as a configuration and velocity error, the error variables in our study evolve on the configuration and momentum domains.

% We have
% \begin{align*}
%     \dot{Q} &= Q U\\
%     \dot{P} &= \ad_{U}^\ast  P + \tau
% \end{align*}
% and
% \begin{align*}
%     \dot{Q}_d &= Q_d U_d\\
%     \dot{P}_d &= \ad_{U_d}^\ast  P_d + \tau_d.
% \end{align*}

\begin{theorem}
    \label{thm:error_dyn}
Consider trajectories $(Q(t), P(t))$ and $(Q_d(t)$, $P_d(t))$ \eqref{eq:dynamics} for inputs $(U,\tau)$ and $(U_d,\tau_d)$ respectively.
That is
\begin{align*}
 \dot{Q} = Q U  \quad\quad&\text{and}\quad\quad \dot{P} = \ad_{U}^\ast  P + \tau, \\
 \dot{Q}_d = Q_d U_d  \quad\quad&\text{and}\quad\quad
 \dot{P}_d = \ad_{U_d}^\ast  P_d + \tau_d.
\end{align*}
Define $e = \phi((Q_d, P_d)^{-1}, (Q, P))$ to be the error trajectory \eqref{eq:e_def}.

Let $\psi : \grpG^\ast _{\ltimes} \times \vecL \to \vecL$ be given by \eqref{eq:psi_def}.
Define the \emph{control errors}
\begin{subequations}\label{eq:control errors}
\begin{align}
    \tilde{U}_\psi \coloneqq \psi^U((Q_d, P_d)^{-1}, (U - U_d, \tau - \tau_d))\\
    \tilde{\tau}_\psi \coloneqq \psi^\tau((Q_d, P_d)^{-1}, (U - U_d, \tau - \tau_d)). \label{eq:tilde_tau}
\end{align}
%where the input deltas $(U - U_d, \tau - \tau_d)$ are translated with the desired trajectory $(Q_d, P_d)$.
\end{subequations}

Then the error dynamics are
\begin{align}
    &\dot{Q}_E = Q_E \tilde{U}_\psi ; & &\dot{P}_E = \ad_{\tilde{U}_\psi}^\ast P_E + \tilde{\tau}_\psi \label{eq:error_dynamics}.
\end{align}
The error dynamics preserve the extended Euler-Poincare structure; that is, $\dot{e} = f(e, (\tilde{U}_\psi, \tilde{\tau}_\psi))$.
\end{theorem}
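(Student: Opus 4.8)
The plan is to verify the two equations in \eqref{eq:error_dynamics} separately by differentiating the explicit error coordinates $(Q_E,P_E)=(Q Q_d^{-1},\Ad_{Q_d^{-1}}^\ast(P-P_d))$ from \eqref{eq:e_def}, and then matching each result against the claimed right-hand side; together these two equations are exactly $\dot e = f(e,(\tilde U_\psi,\tilde\tau_\psi))$ by \eqref{eq:f_def}. Before starting I would record the concrete forms of the control errors, obtained by evaluating the input action \eqref{eq:psi_def} at the group element $(Q_d,P_d)^{-1}=(Q_d^{-1},-\Ad_{Q_d^{-1}}^\ast P_d)$ with input $(U-U_d,\tau-\tau_d)$:
\begin{align*}
\tilde U_\psi &= \Ad_{Q_d}(U-U_d),\\
\tilde\tau_\psi &= \Ad_{Q_d^{-1}}^\ast(\tau-\tau_d)+\ad_{\tilde U_\psi}^\ast\Ad_{Q_d^{-1}}^\ast P_d,
\end{align*}
since both the configuration and momentum steps depend on these.

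For the configuration error I would differentiate $Q_E = Q Q_d^{-1}$ by the product rule, using $\dot Q = Q U$ and $\ddt Q_d^{-1} = -Q_d^{-1}\dot Q_d Q_d^{-1} = -U_d Q_d^{-1}$. This collapses to $\dot Q_E = Q(U-U_d)Q_d^{-1} = Q_E\,\Ad_{Q_d}(U-U_d) = Q_E\,\tilde U_\psi$, which is the first equation directly.

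The momentum error is the substantive computation. Differentiating $P_E = \Ad_{Q_d^{-1}}^\ast(P-P_d)$, applying \eqref{eq:Ad_inv_star_deriv} to the factor $\Ad_{Q_d^{-1}}^\ast$ and substituting the two momentum equations, I expect the reference cross-terms $\ad_{U_d}^\ast P_d$ to cancel, leaving the clean expression $\dot P_E = \Ad_{Q_d^{-1}}^\ast\bigl[\ad_{U-U_d}^\ast P + (\tau-\tau_d)\bigr]$. Separately I would expand the claimed right-hand side $\ad_{\tilde U_\psi}^\ast P_E + \tilde\tau_\psi$; here the coriolis-like coupling term $\ad_{\tilde U_\psi}^\ast\Ad_{Q_d^{-1}}^\ast P_d$ hidden in $\tilde\tau_\psi$ recombines with $\ad_{\tilde U_\psi}^\ast P_E = \ad_{\tilde U_\psi}^\ast\Ad_{Q_d^{-1}}^\ast(P-P_d)$ to produce $\ad_{\Ad_{Q_d}(U-U_d)}^\ast\Ad_{Q_d^{-1}}^\ast P$. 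Reconciling this with the previous display requires the algebraic conjugation identity $\ad_{\Ad_X u}^\ast = \Ad_{X^{-1}}^\ast\ad_u^\ast\Ad_X^\ast$, which itself follows from equating the two expressions for $\ddt\Ad_X$ in \eqref{eq:Ad_deriv} and dualising; applying it with $X=Q_d$, $u=U-U_d$ and using $\Ad_{Q_d}^\ast\Ad_{Q_d^{-1}}^\ast=\id$ yields exactly $\Ad_{Q_d^{-1}}^\ast\ad_{U-U_d}^\ast P$, matching $\dot P_E$.

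I expect the momentum bookkeeping to be the main obstacle: keeping the composition order of the dual maps straight and correctly merging the coupling term of $\tilde\tau_\psi$ with the $\ad^\ast$-action on $P_E$ is where sign or ordering errors would most easily creep in. Conceptually, none of this is surprising given Lemma~\ref{lem:lemma_eq_d2}: because $\phi$ is a right action and both $f$ and each $\psi_{g}$ are linear in the input, the state-variation of $e$ contributes $f(e,\psi_{(Q_d,P_d)^{-1}}(U,\tau))$ by equivariance, while the reference-variation contributes $-f(e,\psi_{(Q_d,P_d)^{-1}}(U_d,\tau_d))$, and linearity repackages the difference as the single transformed input $(\tilde U_\psi,\tilde\tau_\psi)$. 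This observation could replace the direct momentum calculation, at the cost of separately evaluating the reference-variation term, which is of comparable difficulty.
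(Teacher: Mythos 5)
Your proof is correct, but it takes a genuinely different route from the paper's. You verify the two component equations by brute force: differentiate $Q_E = QQ_d^{-1}$ and $P_E = \Ad_{Q_d^{-1}}^\ast(P-P_d)$ directly, observe the cancellation of the reference terms $\ad_{U_d}^\ast P_d$, and reconcile the result with the claimed right-hand side via the conjugation identity $\ad_{\Ad_X u}^\ast = \Ad_{X^{-1}}^\ast \ad_u^\ast \Ad_X^\ast$ (which indeed follows by dualising $\Ad_X\ad_u = \ad_{\Ad_X u}\Ad_X$, i.e.\ the content of \eqref{eq:Ad_deriv}--\eqref{eq:Ad_star_deriv}); your explicit expressions for $\tilde U_\psi$ and $\tilde\tau_\psi$, and all intermediate steps, check out. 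The paper instead never unpacks the components: it applies the chain rule to $e = \phi((Q_d,P_d)^{-1},(Q,P))$, handles the $\tD_2\phi$ term with the equivariance Lemma~\ref{lem:lemma_eq_d2}, handles the $\tD_1\phi$ term with the two auxiliary identities \eqref{eq:d1_ident} and \eqref{eq:d_inv_ident}, and collapses the difference of the two $f$-terms using linearity of $f$ and $\psi_g$ in the input --- exactly the alternative you sketch in your closing paragraph. Your computational route is more elementary and makes the cancellations visible, which is reassuring for anyone worried about sign and ordering conventions in the dual maps; the paper's route is longer to set up but explains \emph{why} the error dynamics inherit the extended Euler--Poincar\'e structure (it is forced by equivariance plus input-linearity) and would generalise immediately to any equivariant system of this form.
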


\begin{proof}
    First, note the following identities:
    \begin{align}
    \tD_1 \phi((Q_1, P_1),& (Q_2, P_2))[Q_1 U, W] \notag \\
        &= f(\phi((Q_1, P_1),(Q_2, P_2)), (U, W))  \label{eq:d1_ident}
    \end{align}
    and
    \begin{align}
            \ddt (Q, P)^{-1} = -f((Q, P)^{-1}, \psi((Q, P)^{-1}, (U, \tau))), \label{eq:d_inv_ident}
    \end{align}
    that can be proved with straight-forward computation.

    For \eqref{eq:d1_ident}, observe that
    \begin{align}
        \tD_1 \phi((&Q_1, P_1), (Q_2, P_2))[Q_1 V, W] \notag \\
                             %&= \frac{\td}{\td s} \bigg\rvert_{s=0} \phi((Q_1e^{Vs}, P_1 + sW), (Q_2, P_2)) \label{eq:d1_eq_1}\\
                             &= \frac{\td}{\td s} \bigg\rvert_{s=0} (Q_2 Q_1 e^{Vs} , \Ad_{Q_1 e^{Vs}}^\ast  P_2 + P_1 + sW) \label{eq:d1_eq_1}\\
                             &= (Q_2 Q_1 V, \ad_{V}^\ast  \Ad_{Q_1}^\ast  P_2 + W) \notag\\
                             &= f((Q_2 Q_1, \Ad_{Q_1}^\ast  P_2), (V, W)) \label{eq:d1_eq_2} \\
                            &= f(\phi((Q_1, P_2),(Q_2, P_2)), (V, W)) \notag,
    \end{align}
    where \eqref{eq:d1_eq_1} is the definition of the directional derivative and \eqref{eq:d1_eq_2} follows from the definition of $f$.

    For \eqref{eq:d_inv_ident}, note that
    \begin{align}
        \ddt (Q, P)^{-1} &= \ddt (Q^{-1}, -\Ad_{Q^{-1}}^\ast  P) \notag\\
        &= (-Q^{-1} QUQ^{-1}, \Ad_{Q^{-1}}^\ast  \ad_U^\ast  P \notag\\
         & \quad \quad - \Ad_{Q^{-1}}^\ast (\ad_U^\ast  P + \tau)) \label{eq:inverse_deriv_1}\\
        &= (-Q^{-1} QUQ^{-1}, \ad_{\Ad_Q U}^\ast  \Ad_{Q^{-1}}^\ast  P \notag\\
        & \quad \quad -\ad_{\Ad_Q U}^\ast  \Ad_{Q^{-1}}^\ast  P - \Ad_{Q^{-1}}^\ast  \tau) \label{eq:inverse_deriv_2}\\
        &= (-Q^{-1} \psi^U((Q^{-1}, -\Ad_{Q^{-1}}^\ast  P), (U, \tau)), \notag\\
        & \quad \quad  \ad_{\Ad_Q U}^\ast  \Ad_{Q^{-1}}^\ast  P \notag\\
        &\quad \quad  -\psi^\tau((Q^{-1}, -\Ad_{Q^{-1}}^\ast  P), (U, \tau))) \label{eq:inverse_deriv_3}\\
        &= -f((Q, P)^{-1}, \psi((Q, P)^{-1}, (U, \tau))) \label{eq:inverse_deriv_4},
    \end{align}
        where \eqref{eq:inverse_deriv_1} follows from \eqref{eq:Ad_inv_star_deriv} and the product rule, \eqref{eq:inverse_deriv_2} follows from \eqref{eq:Ad_inv_star_deriv}, \eqref{eq:inverse_deriv_3} follows from the definition of $\psi$ and \eqref{eq:inverse_deriv_4} follows from the definition of $f$.

    Finally, to prove \eqref{eq:error_dynamics} compute 
    \begin{align}
        \dot{e} &= D_1 \phi ((Q_d, P_d)^{-1}, (Q, P)) [\ddt (Q_d, P_d)^{-1}] \notag\\
                & \quad \quad + D_2 \phi((Q_d, P_d)^{-1}, (Q, P)) [\ddt (Q, P)] \label{eq:e_dot_1}\\
                &= D_1 \phi ((Q_d, P_d)^{-1}, (Q, P)) \notag\\
                &\quad \quad \quad \quad [-f((Q_d, P_d)^{-1}, \psi((Q_d, P_d)^{-1}, (U_d, \tau_d)))] \notag\\
                & \quad \quad + D\phi^P \phi((Q_d, P_d)^{-1}) [f((Q, P), (U, \tau))] \label{eq:e_dot_2}\\
                &= -f(\phi((Q_d, P_d)^{-1},(Q, P)), \psi((Q_d, P_d)^{-1}, (U_d, \tau_d))) \notag \\
                & \quad +f(\phi((Q_d, P_d)^{-1}, (Q, P)), \psi((Q_d, P_d)^{-1}), (U, \tau)) \label{eq:e_dot_3}\\
                % &= -f(e, \psi((Q_d, P_d)^{-1}, (U_d, \tau_d))) \notag\\
                % & \quad \quad +f(e, \psi((Q_d, P_d)^{-1}), (U, \tau)) \label{eq:e_dot_4}\\
                &= f(e, \psi((Q_d, P_d)^{-1}, (U - U_d, \tau - \tau_d))) \label{eq:e_dot_5} \\
                &= f(e, (\tilde{U}_\psi, \tilde{\tau}_\psi)) \notag
    \end{align}
    where \eqref{eq:e_dot_1} follows by applying the product and chain rules, \eqref{eq:e_dot_2} follows from identity \eqref{eq:d_inv_ident} and the definition of $f$,
    \eqref{eq:e_dot_3} follows from Lemmas \eqref{lem:lemma_eq_d2} and identity \eqref{eq:d1_ident} and \eqref{eq:e_dot_5} follows from the linearity of both $f$ and $\psi$ in the second argument.
\end{proof}

In the classical Euler-Poincare system the constraint $P = \tD l(U)$ couples velocity $U$ and the momentum $P$. 
If this is true for the trajectories $(Q(t),P(t))$ and $(Q_d(t), P_d(t))$ in Theorem \ref{thm:error_dyn} then it is of interest to consider whether a similar relationship holds for the error dynamics $(Q_E(t), P_E(t))$.
%Since the error system \eqref{eq:error_dynamics}  is
%also an Euler-Poincare system, there must be some ``error inertia'' relating the ``error momentum'' $P_E$ and the ``error velocity'' $\tilde{U}_\psi$.
Indeed, winding back definitions, we have
\begin{align*}
    P_E \coloneqq \Ad_{Q_d^{-1}}^\ast  (P - P_d) &=\Ad_{Q_d^{-1}}^\ast   \mathbb{I} \Ad_{Q_d}^{-1}  \Ad_{Q_d} (U - U_d)\\
        &= \Ad_{Q_d^{-1}}^\ast  \mathbb{I} \Ad_{Q_d}^{-1} \tilde{U}_\psi
\end{align*}
So a suitable time-varying ``inertia error'' $\bar{\mathbb{I}}$ is given by
\begin{align}
    \bar{\mathbb{I}} \coloneqq \Ad_{Q_d^{-1}}^\ast  \mathbb{I} \Ad_{Q_d^{-1}} \label{eq:I_bar},
\end{align}
so that $P_E = \bar{\mathbb{I}} \tilde{U}_\psi$.
This matrix can be interpreted as the the inertia matrix $\mathbb{I}$ expressed in the frame of the time-varying trajectory $Q_d$.
That is, the operator transforms a spatial velocity into the body frame with the rightmost $\Ad_{Q_d^{-1}}$, then applies $\mathbb{I}$, then transforms the resultant body-frame momentum back into the inertial frame with leftmost $\Ad_{Q_d^{-1}}^\ast$.

\subsection{Energy and Lyapunov analysis}
In the following development, we will assume that the linear map $\mathbb{I} : \gothg \to \gothg^\ast$ is symmetric (that is, once $\gothg^{**}$ and $\gothg$ are identified, then $\mathbb{I} = \mathbb{I}^\ast$).
    If the kinetic energy is used to define the Lagrangian $L$, then this assumption follows naturally.
    In this case, one can also show that $\mathbb{I}^{-1}$ and $\bar{\mathbb{I}}^{-1}$ are symmetric, facts that we will use but not prove.

    There is a natural choice of energy function associated with these error states, given by the product of the error momentum and velocity:
        \begin{align}
            \calL_E \coloneqq \frac{1}{2} P_E(\bar{\mathbb{I}}^{-1} P_E). \label{eq:err_energy}
        \end{align}

    \begin{lemma}
    \label{lem:energy_deriv}
        Let $\calL_E$ be defined as in \eqref{eq:err_energy}.
        Then the time derivative of $\calL_E$ is given by
        \begin{align*}
            \frac{\td}{\td t} &\calL_E = \tilde{\tau}_\psi (\bar{\mathbb{I}}^{-1} P_E) + \ad_{\Ad_{Q_d}U_d}^\ast P_E( \bar{\mathbb{I}}^{-1} P_E).
        \end{align*}
    \end{lemma}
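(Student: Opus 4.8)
The plan is to differentiate $\calL_E = \frac{1}{2} P_E(\bar{\mathbb{I}}^{-1} P_E)$ directly, treating both the coalgebra element $P_E$ and the operator $\bar{\mathbb{I}}^{-1}$ as time-varying. The central simplification I would exploit is the relation $\bar{\mathbb{I}}^{-1} P_E = \tilde{U}_\psi$, which is just the identity $P_E = \bar{\mathbb{I}} \tilde{U}_\psi$ established immediately before the lemma, rearranged. Recognising the vector argument of $P_E$ as the error velocity itself is exactly what makes the troublesome term vanish at the end.

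First I would apply the product rule to obtain
\[
\ddt \calL_E = \frac{1}{2}\dot{P}_E(\bar{\mathbb{I}}^{-1} P_E) + \frac{1}{2} P_E\big((\ddt \bar{\mathbb{I}}^{-1}) P_E\big) + \frac{1}{2} P_E(\bar{\mathbb{I}}^{-1}\dot{P}_E).
\]
Invoking the assumed symmetry of $\bar{\mathbb{I}}^{-1}$, which gives $P_E(\bar{\mathbb{I}}^{-1}\dot{P}_E) = \dot{P}_E(\bar{\mathbb{I}}^{-1}P_E)$, the first and third terms coincide and combine into $\dot{P}_E(\bar{\mathbb{I}}^{-1} P_E)$. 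This leaves a ``kinetic'' contribution and an ``inertia-rate'' contribution to handle separately.

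For the inertia-rate term I would use the closed form $\bar{\mathbb{I}}^{-1} = \Ad_{Q_d}\mathbb{I}^{-1}\Ad_{Q_d}^\ast$ obtained by inverting \eqref{eq:I_bar}, and differentiate using \eqref{eq:Ad_deriv} and \eqref{eq:Ad_star_deriv} with $\dot{Q}_d = Q_d U_d$. Writing $W \coloneqq \Ad_{Q_d} U_d$ this yields $\ddt \bar{\mathbb{I}}^{-1} = \ad_W \bar{\mathbb{I}}^{-1} + \bar{\mathbb{I}}^{-1}\ad_W^\ast$. Substituting, and using symmetry of $\bar{\mathbb{I}}^{-1}$ together with the duality $P_E(\ad_W v) = (\ad_W^\ast P_E)(v)$, both resulting pieces collapse to the single term $\ad_{\Ad_{Q_d}U_d}^\ast P_E(\bar{\mathbb{I}}^{-1} P_E)$, matching the second term of the claim. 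For the kinetic term I would substitute the error dynamics $\dot{P}_E = \ad_{\tilde{U}_\psi}^\ast P_E + \tilde{\tau}_\psi$ from \eqref{eq:error_dynamics}, giving $\dot{P}_E(\bar{\mathbb{I}}^{-1}P_E) = (\ad_{\tilde{U}_\psi}^\ast P_E)(\bar{\mathbb{I}}^{-1}P_E) + \tilde{\tau}_\psi(\bar{\mathbb{I}}^{-1}P_E)$. Here the key cancellation appears: since $\bar{\mathbb{I}}^{-1}P_E = \tilde{U}_\psi$, the first piece is $(\ad_{\tilde{U}_\psi}^\ast P_E)(\tilde{U}_\psi) = P_E(\ad_{\tilde{U}_\psi}\tilde{U}_\psi) = 0$ because $\ad_u u = 0$. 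What remains is $\tilde{\tau}_\psi(\bar{\mathbb{I}}^{-1}P_E)$, the first term of the claim.

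The step I expect to require the most care is the inertia-rate term: keeping track of which $\Ad$-derivative identity applies to $\Ad_{Q_d}$ versus $\Ad_{Q_d}^\ast$, and correctly invoking the symmetry of $\bar{\mathbb{I}}^{-1}$ (assumed in the preceding paragraph) to merge the two $\ad_W$ contributions into a single term. Everything else reduces to the product rule, the duality between $\ad$ and $\ad^\ast$, and the observation $\ad_u u = 0$.
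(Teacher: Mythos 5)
Your proposal is correct and follows essentially the same route as the paper's proof: product rule, substitution of the error dynamics for $\dot{P}_E$, the derivative formula $\ddt\bar{\mathbb{I}}^{-1} = \ad_{\Ad_{Q_d}U_d}\bar{\mathbb{I}}^{-1} + \bar{\mathbb{I}}^{-1}\ad_{\Ad_{Q_d}U_d}^\ast$ (which the paper isolates as Proposition~\ref{prop:I_bar_dot} and you re-derive inline), symmetry of $\bar{\mathbb{I}}^{-1}$ to merge dual pairs, and the cancellation $(\ad_{\tilde{U}_\psi}^\ast P_E)(\tilde{U}_\psi) = P_E(\ad_{\tilde{U}_\psi}\tilde{U}_\psi) = 0$ via $\tilde{U}_\psi = \bar{\mathbb{I}}^{-1}P_E$. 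The only difference is cosmetic ordering of the algebra.
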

    \begin{proof}
        \begin{align}
            \frac{\td}{\td t} &\calL_E = \frac{\td}{\td t} \frac{1}{2}  P_E(\bar{\mathbb{I}}^{-1} P_E) \notag\\
                                              &= \frac{1}{2}  \dot{P}_E (\bar{\mathbb{I}}^{-1} P_E) + \frac{1}{2}  P_E(\frac{\td}{\td t} (\bar{\mathbb{I}}^{-1}) P_E) + \frac{1}{2} P_E(\bar{\mathbb{I}}^{-1} \dot{P}_E) \notag \\
                                            %   &= \frac{1}{2} (\ad_{\tilde{U}_\psi}^\ast  P_E + \tilde{\tau}_\psi) (\bar{\mathbb{I}}^{-1} P_E) \notag\\
                                            %   & \quad \quad + \frac{1}{2} P_E((\ad_{\Ad_{Q_d}U_d} \bar{\mathbb{I}}^{-1}  + \bar{\mathbb{I}}^{-1} \ad_{\Ad_{Q_d}U_d}^\ast ) P_E) \label{eq:energy_0}\\
                                            %   & \quad \quad + \frac{1}{2} P_E(\bar{\mathbb{I}}^{-1}(\ad_{\tilde{U}_\psi}^\ast  P_E + \tilde{\tau}_\psi)) \notag\\
                                              &= \frac{1}{2} \ad_{\tilde{U}_\psi}^\ast  P_E (\bar{\mathbb{I}}^{-1} P_E) + \frac{1}{2} \tilde{\tau}_\psi (\bar{\mathbb{I}}^{-1} P_E) \notag\\
                                              & \quad \quad + \frac{1}{2} P_E(\ad_{\Ad_{Q_d}U_d} \bar{\mathbb{I}}^{-1} P_E)\notag\\
                                              & \quad \quad + \frac{1}{2} P_E(\bar{\mathbb{I}}^{-1} \ad_{\Ad_{Q_d}U_d}^\ast  P_E) \label{eq:energy_0}\\
                                              & \quad \quad + \frac{1}{2} \ad_{\tilde{U}_\psi}^\ast  P_E(\bar{\mathbb{I}}^{-1} P_E) + \frac{1}{2} P_E(\bar{\mathbb{I}}^{-1} \tilde{\tau}_\psi) \label{eq:energy_1}\\
                                            %   &=  \frac{1}{2} P_E (\ad_{\tilde{U}_\psi} \bar{\mathbb{I}}^{-1} P_E) + \frac{1}{2} \tilde{\tau}_\psi (\bar{\mathbb{I}}^{-1} P_E) \notag\\
                                            %   & \quad \quad + \frac{1}{2} P_E(\ad_{\Ad_{Q_d}U_d} \bar{\mathbb{I}}^{-1} P_E) \notag  \\
                                            %   & \quad \quad + \frac{1}{2} P_E(\bar{\mathbb{I}}^{-1} \ad_{\Ad_{Q_d}U_d}^\ast  P_E) \notag\\
                                            %   & \quad \quad + \frac{1}{2} P_E(\ad_{\tilde{U}_\psi } \bar{\mathbb{I}}^{-1} P_E) + \frac{1}{2} P_E(\bar{\mathbb{I}}^{-1} \tilde{\tau}_\psi), \notag\\
                                            %   &= \frac{1}{2} \tilde{\tau}_\psi (\bar{\mathbb{I}}^{-1} P_E) + \frac{1}{2} P_E(\bar{\mathbb{I}}^{-1} \tilde{\tau}_\psi) \notag\\
                                            %   & \quad \quad + \frac{1}{2} P_E(\ad_{\Ad_{Q_d}U_d} \bar{\mathbb{I}}^{-1} P_E) \notag \\
                                            %   & \quad \quad + \frac{1}{2} P_E(\bar{\mathbb{I}}^{-1} \ad_{\Ad_{Q_d}U_d}^\ast  P_E) \notag \\
                                              &= \tilde{\tau}_\psi (\bar{\mathbb{I}}^{-1} P_E) + \ad_{\Ad_{Q_d}U_d}^\ast P_E( \bar{\mathbb{I}}^{-1} P_E) \label{eq:energy_3},
         \end{align}
         where we have used the fact that $\tilde{U}_\psi = \bar{\mathbb{I}}^{-1} P_E$, so $\ad_{\tilde{U}_\psi}(\bar{\mathbb{I}}^{-1} P_E) = 0$, \eqref{eq:energy_0} follows from Proposition \ref{prop:I_bar_dot} and \eqref{eq:energy_1} and \eqref{eq:energy_3} follow from the symmetry of $\bar{\mathbb{I}}^{-1}$.
        \end{proof}

    %But then what
    %So we probably need to just look at the isomorphisms of $\gothg$ and $\gothg^\ast  with $\R^3$ and take the corresponding dot product}

    %Define an energy term $\langle p_E, I^{-1}p_E \rangle$

    % Note:
    % \begin{align*}
    %     p_E &= \Ad_{Q_d^{-1}}^\ast  (p - P_d)\\
    %         & = \Ad_{Q_d^{-1}}^\ast  I \tilde{u}\\
    %         & = \Ad_{Q_d^{-1}}^\ast  I \Ad_{Q_d^{-1}} \tilde{u}\\
    % \end{align*}

    % Define $\bar{\mathbb{I}} \coloneqq \Ad_{Q_d^{-1}}^\ast  I \Ad_{Q_d^{-1}}$.

    % $\bar{\mathbb{I}}^{-1} = \Ad_{Q_d} I^{-1} \Ad_{Q_d}^\ast $

    % $relate to fibre deriv$

    % $\cal{L} = l(p_E, p_E)?$

    % Define inner product so that L(a) = <a,a>

    % is I equal to fibre derivative?

    % Will eventually use
    % $\cal{L} = H(p_E) + <I - E, I - E>$

% \subsection{Lyapunov stuff}

\begin{theorem}
    \label{thm:stability}
Let the error trajectory $e(t)\in \grpG^\ast _{\ltimes}$ be defined by \eqref{eq:e_def} with dynamics \eqref{eq:error_dynamics}.
Let the desired configuration trajectory $Q_d(t)$ and its inverse $Q_d^{-1}(t)$ be bounded.
Define the input $\tilde{\tau}_\psi$ by
\begin{align}
    \tilde{\tau}_\psi = - k_v K(\tilde{U}_\psi) -\ad_{\Ad_{Q_d}U_d}^\ast P_E -k_p K \left(\mathbb{P}_\gothg (Q_E^\top(Q_E - I)) \right), \label{eq:control_law}
\end{align}
where $k_p, k_v \in \R^+$ are fixed gains, $K : \gothg \to \gothg^\ast$ is the linear map defined by $K(V) = \langle V, \cdot \rangle$ and $\mathbb{P}_\gothg : \R^{m \times m} \to \gothg$ is the projection operator that takes an $m \times m$ matrix and returns the skew-symmetric component.
Then $e(t)$ converges to the equilibrium set $\left\{ (Q_E, P_E) : \mathbb{P}_\gothg (Q_E^\top(I - Q_E)) = 0, P_E = 0 \right\}$.
In particular, the equilibrium set contains the identity as an isolated stable equilibrium point, so the choice of input renders the error system \eqref{eq:error_dynamics} locally asymptotically stable at identity.
\end{theorem}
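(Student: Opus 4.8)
The plan is to build a strict Lyapunov function by augmenting the error kinetic energy $\calL_E$ of \eqref{eq:err_energy} with a configuration potential that penalises $Q_E$ away from the identity. Concretely, I would take
\begin{align*}
\mathcal{V} := \calL_E + \tfrac{k_p}{2}\langle Q_E - I, Q_E - I\rangle = \tfrac12 P_E(\bar{\mathbb{I}}^{-1}P_E) + \tfrac{k_p}{2}\tr\!\big((Q_E-I)^\top(Q_E-I)\big).
\end{align*}
Both terms are non-negative, and since $\bar{\mathbb{I}}^{-1}$ is positive definite (a congruence of $\mathbb{I}$ by \eqref{eq:I_bar}), $\mathcal{V}=0$ exactly at $(Q_E,P_E)=(I,0)$. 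The potential is deliberately chosen so that its differential reproduces the configuration term in the control law \eqref{eq:control_law}; this is the mechanism that forces the cross terms to cancel.

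First I would differentiate the potential along \eqref{eq:error_dynamics}. Using $\dot Q_E = Q_E\tilde U_\psi$ and the identity $\langle A, BC\rangle=\langle B^\top A, C\rangle$,
\begin{align*}
\tfrac{\td}{\td t}\,\tfrac12\langle Q_E-I,Q_E-I\rangle = \langle Q_E^\top(Q_E-I),\tilde U_\psi\rangle = \langle \mathbb{P}_\gothg(Q_E^\top(Q_E-I)),\tilde U_\psi\rangle,
\end{align*}
where the last step uses $\tilde U_\psi\in\gothg$ and orthogonality of the projection $\mathbb{P}_\gothg$. Combining this with Lemma \ref{lem:energy_deriv} for $\dot{\calL}_E$, substituting \eqref{eq:control_law}, and using $\bar{\mathbb{I}}^{-1}P_E=\tilde U_\psi$ together with $K(V)(\cdot)=\langle V,\cdot\rangle$, I expect the Coriolis-like term $\ad_{\Ad_{Q_d}U_d}^\ast P_E$ and the two configuration terms to cancel in pairs, leaving the clean bound
\begin{align*}
\dot{\mathcal{V}} = -k_v\,\langle\tilde U_\psi,\tilde U_\psi\rangle = -k_v\|\tilde U_\psi\|^2 \le 0.
\end{align*}

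Convergence then follows from a boundedness-plus-Barbalat argument, made necessary by the time dependence of $\bar{\mathbb{I}}$ and of $\Ad_{Q_d}U_d$. Since $\mathcal{V}$ is non-increasing and bounded below it converges, so $\tilde U_\psi\in L^2$; boundedness of $Q_d$ and $Q_d^{-1}$ renders $\bar{\mathbb{I}}^{-1}$ uniformly positive definite, whence all signals remain bounded and $\|P_E\|$ is comparable to $\sqrt{\calL_E}$. Uniform continuity of $\tilde U_\psi$ then yields $\tilde U_\psi\to 0$ by Barbalat's lemma, and hence $P_E=\bar{\mathbb{I}}\tilde U_\psi\to 0$. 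To reach the full equilibrium set I would run a second stage: with $\tilde U_\psi,P_E\to 0$, the dynamics $\dot P_E=\ad_{\tilde U_\psi}^\ast P_E+\tilde\tau_\psi$ force the residual $-k_pK(\mathbb{P}_\gothg(Q_E^\top(Q_E-I)))$ to vanish (a nested Barbalat/Matrosov argument: $P_E$ converging with $\dot P_E$ uniformly continuous gives $\dot P_E\to0$), so $\mathbb{P}_\gothg(Q_E^\top(Q_E-I))\to0$ and the trajectory approaches the stated set.

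Finally, for local asymptotic stability at the identity I would linearise the configuration residual: writing $Q_E=I+\xi+O(\|\xi\|^2)$ with $\xi\in\gothg$ gives $\mathbb{P}_\gothg(Q_E^\top(Q_E-I))=\xi+O(\|\xi\|^2)$, so the identity is an isolated point of the equilibrium set and $\mathcal{V}$ is locally positive definite with $\dot{\mathcal{V}}\le0$, yielding local asymptotic stability. The main obstacle is the non-autonomous nature of the closed loop: because $\dot{\mathcal{V}}$ is only negative semidefinite in the velocity error, LaSalle's invariance principle does not apply directly, and establishing configuration convergence requires the careful boundedness and Barbalat/Matrosov reasoning above, which is precisely where the boundedness hypotheses on the reference $Q_d$, $Q_d^{-1}$ are genuinely used.
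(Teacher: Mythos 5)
Your proposal is correct and follows essentially the same route as the paper: the identical Lyapunov function $\calL = \tfrac{1}{2}P_E(\bar{\mathbb{I}}^{-1}P_E) + \tfrac{k_p}{2}\langle Q_E - \Id, Q_E - \Id\rangle$, the same cancellation yielding $\dot{\calL} = -k_v\lVert\tilde{U}_\psi\rVert^2$ via the orthogonality of $\mathbb{P}_\gothg$, and the same two-stage Barbalat argument (first $\tilde{U}_\psi, P_E \to 0$, then $\dot{P}_E \to 0$ forcing the configuration residual to vanish). Your linearisation $\mathbb{P}_\gothg(Q_E^\top(Q_E - I)) = \xi + O(\lVert\xi\rVert^2)$ is in fact a more explicit justification of the isolatedness of $(\Id,0)$ than the paper offers, but the substance of the argument is the same.
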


\begin{proof}
Define the Lyapunov function
\begin{align}
    \calL = \frac{1}{2} P_E(\bar{\mathbb{I}}^{-1} P_E) + \frac{k_p}{2}\langle Q_E - \Id, Q_E - \Id \rangle \label{eq:lyap},
\end{align}
which is clearly positive definite.
The time derivative $\dot{\calL}$ is given by
\begin{align}
    \dot{\calL} &= \tilde{\tau}_\psi (\bar{\mathbb{I}}^{-1} P_E) + P_E(\ad_{\Ad_{Q_d}U_d} \bar{\mathbb{I}}^{-1} P_E) \label{eq:lyap_deriv_1}\\
                    & \quad \quad \quad \quad + k_p \langle Q_E - \Id, Q_E \tilde{U}_\psi \rangle \label{eq:lyap_deriv_2},
\end{align}
where the term \eqref{eq:lyap_deriv_1} comes from Lemma \ref{lem:energy_deriv} and the term \eqref{eq:lyap_deriv_2} comes from \eqref{eq:error_dynamics}.
Substituting the control law \eqref{eq:control_law}, we find
\begin{align}
    \dot{\calL} &= (- k_v K(\tilde{U}_\psi)) (\bar{\mathbb{I}}^{-1} P_E) \notag\\
                  & \quad \quad -\ad_{\Ad_{Q_d}U_d}^\ast P_E(\bar{\mathbb{I}}^{-1} P_E) +  \ad_{\Ad_{Q_d}U_d}^\ast P_E( \bar{\mathbb{I}}^{-1} P_E) \notag \\
                  & \quad \quad -k_p K \left(\mathbb{P}_\gothg (Q_E^\top(Q_E - \Id)) \right)(\bar{\mathbb{I}}^{-1} P_E) \notag\\
                  & \quad \quad + k_p \langle Q_E - \Id, Q_E \tilde{U}_\psi \rangle \notag\\
                  &= -k_v \langle \tilde{U}_\psi, \tilde{U}_\psi \rangle \label{eq:l_dot_1}\\
                  & \quad \quad -k_p \langle \mathbb{P}_\gothg (Q_E^\top(Q_E - \Id)), \tilde{U}_\psi \rangle + k_p \langle Q_E - \Id, Q_E \tilde{U}_\psi \rangle \label{eq:l_dot_2}\\
                  &= -k_v \lVert\tilde{U}_\psi \rVert^2 \notag\\
                  & \quad \quad -k_p\langle Q_E - \Id, Q_E \tilde{U}_\psi \rangle  + k_p \langle Q_E - \Id, Q_E \tilde{U}_\psi \rangle \label{eq:l_dot_3}\\
                  &= -k_v \lVert\tilde{U}_\psi \rVert^2 \notag,
\end{align}
where \eqref{eq:l_dot_1} and \eqref{eq:l_dot_2} follow from the definition of $K$ and $\tilde{U}_\psi$, and \eqref{eq:l_dot_3} follows from the properties of the Frobenius inner product and the fact that $\langle \mathbb{P}_\gothg A, V \rangle = \langle A, V \rangle$ for all $A \in \R^{m \times m}$ and $V \in \gothg$.
Hence $\dot{\calL}$ is negative semi-definite.

We will invoke Barbalat's lemma \mbox{(\cite{slotine1991applied})} to prove the theorem claim.
The Lyapunov function $\calL(t)$ is bounded above by $\calL(0)$ and is non-increasing and so $\lim_{t \to \infty} \calL(t) = c \geq 0$. 
In particular, as both sides of \eqref{eq:lyap} are strictly positive, this implies that $Q_E$ and $P_E$ (and hence $\tilde{U}_\psi$) are bounded.
We now show that this implies that $\dot{\tilde{U}}_\psi$ (and $P_E$) are bounded.
One can compute
\begin{align*}
    \ddt \tilde{U}_\psi &= (\ddt \bar{\mathbb{I}}^{-1}) P_E + \bar{\mathbb{I}}^{-1} \dot{P}_E\\
                        &= \ad_{\Ad_{Q_d}U_d} \bar{\mathbb{I}}^{-1}P_E + \bar{\mathbb{I}}^{-1} \ad_{\Ad_{Q_d}U_d}^*P_E\\
                        & \quad \quad + \bar{\mathbb{I}}^{-1}\ad_{\tilde{U}_\psi}^* P_E + \bar{\mathbb{I}}^{-1} \tilde{\tau}_\psi.
\end{align*}
The first three terms are linear functions of bounded variables, so are also bounded.
Referring to the definition of $\tilde{\tau}_\psi$ \eqref{eq:control_law}, the same also applies to this term, so $\dot{\tilde{U}}_\psi$ (and $P_E$) are bounded.

From here, one can compute $\ddot{\calL} =  -2 k_v \langle \dot{\tilde{U}}_\psi, \tilde{U}_\psi \rangle$.
Because $\tilde{U}_\psi$ and $\dot{\tilde{U}}_\psi$  bounded, then $\ddot{\calL}$ is bounded by Cauchy-Schwarz.
Thus, by Barbalat's lemma, $\dot{\calL} \to 0$, $\tilde{U}_\psi \to 0$ and $P_E \to 0$.

Now, because $\dot{P}_E$ is a linear function of $P_E$ terms, it is bounded and so $P_E$ is uniformly continuous.
We observe that $\dot{P}_E$ is therefore a composition of uniformly continuous functions, so is also uniformly continuous.
Then by Barbalat's lemma, because $P_E \to 0$, we must also have $\dot{P}_E \to 0$.
In particular, referring to the definition \eqref{eq:error_dynamics}, then $\tilde{\tau}_\psi \to 0$.
Referring to \eqref{eq:control_law}, the terms $K(\tilde{U}_\psi)$ and $\ad_{\Ad_{Q_d} U_d}$ both go to zero, so we must also have that $\mathbb{P}_\gothg (Q_E^\top(I - Q_E)) \to 0$.
Therefore $\tilde{\tau} \to 0$ and hence $Q_E$ approaches the equilibrium set.

Clearly, if $Q_E = \Id$, then $\mathbb{P}_\gothg (Q_E^\top(I - Q_E)) = 0$, so $(\Id,0)$ is an isolated equilibrium point.
Additionally, $(Q_E,P_E) = (\Id,0)$ is the global minimum of $\calL$, so it is stable.
\end{proof}

%%%%%%%%%%%%%%%%%%%%%%%%%%%%%%%%%%%%%%%%%%%%%%%%%%%%%%%%%%%%%%%%%%%%%%%%%%%%%%%%%%%%%%%%%%%%%%%%%%
\section{Example - controller design on $\SO(3)$}
%%%%%%%%%%%%%%%%%%%%%%%%%%%%%%%%%%%%%%%%%%%%%%%%%%%%%%%%%%%%%%%%%%%%%%%%%%%%%%%%%%%%%%%%%%%%%%%%%%

We now proceed with an example of fully-actuated attitude control.
This is a standard example with known solutions (\cite{bayadiAlmostGlobalAttitude2014}) and so serves as a suitable benchmark to compare our approach.
We identify each of the components (the semi-direct Lie group structure, dynamics, error structure, error dynamics and Lyapunov function) required for our approach.

\subsection{Attitude dynamics and Lie group structure}
We quickly recount the standard construction of attitude dynamics.
For a more detailed construction, we refer the reader to (\cite{marsden1994introduction}).
Attitude is typically modelled on the Lie group $\SO(3)$.
The dynamics on $\SO(3)$ are well-studied, but are usually approached with the velocity and momentum modelled in $\R^3$.
With the identification of $\so^\ast(3) \cong \R^3$, and the use of the identities given in \eqref{eq:r3_ident}, the Euler-Poincare equations \eqref{eq:dynamics} on $\SO(3)$ become the familiar equations (\cite{marsden1994introduction})
\begin{subequations}\label{eq:so3_dynamics}
\begin{align}
\dot{R} &= R \Omega^\times\\
\mathbb{I}\dot{\Omega} &= -\Omega \times \mathbb{I}\Omega + \tau,
\end{align}
\end{subequations}
where $\mathbb{I}$ is the inertia map acting on $\R^3$, $\Omega \in \R^3$ is the body-frame angular velocity and  $\tau \in \R^3$ is a torque input.
In the following equivariant analysis, $\Omega$ will be treated as an input and the physical constraint relating $\Omega$ and the momentum $\mathbb{I} \Omega$ will be reintroduced during control selection.

\subsection{Semidirect group structure and equivariance}

Recall the general semidirect group structure we are interested in \eqref{eq:semidirect}.
Using the identities \eqref{eq:r3_ident}, the corresponding structure on the space $\SO(3) \times \R^3$ is given by
\begin{align*}
    (R_1, p_1)(R_2, p_2) = (R_1 R_2, R_2^\top p_1 + p_2).
\end{align*}
%With the identification of $\so(3)$ and $\R^3$, we have $\ad_{v} u \cong v \times u$, $\ad_{v}^\ast P u = p^\top (v \times u) = u^\top (p \times v)$.
%So $\ad_{v}^\ast P \cong - v \times p$.

% As such, the set $\SO(3) \times \so^\ast(3)$ with group product
% \begin{align*}
%     (R_1, p_1)(R_2, p_2) = (R_1 R_2, \Ad_{R_2}^\ast  P_1 + P_2)
% \end{align*}
% is isomorphic to $\SO(3) \times \R^3$ with the group product
% \begin{align*}
%     (R_1, p_1)(R_2, p_2) = (R_1 R_2, R_2^\top p_1 + p_2).
% \end{align*}

%(R_1^\top, p_2 = - R_2^\top p_1)

With this identification, the group action defined in \eqref{eq:phi_def} is equivalent to the transformation
\begin{align*}
    \phi((R_1,p_2), (R_2, p_2)) \coloneqq \left( R_2 R_1, R_1^\top p_2 + p_1  \right).
\end{align*}

Similarly, the input action defined in \eqref{eq:psi_def} is equivalent to the transformation
\begin{align*}
    \psi((R,& p), (U, \tau)) \coloneqq \left( R^\top \Omega^\times, \quad R^\top \tau + (R^\top \Omega) \times p \right).
\end{align*}

\subsection{Error structure and error dynamics}

Comparing the error definitions \eqref{eq:e_def} to the corresponding identities in $\R^3$ \eqref{eq:r3_ident}, we have
\begin{align*}
    &R_E = R R_d^\top, &p_E = R_d (p - p_d)
\end{align*}
and
\begin{align*}
    \tilde{\Omega}_\psi &\coloneqq R_d (\Omega - \Omega_d)\\
    \tilde{p}_\psi &\coloneqq R_d (\tau - \tau_d) - (R_d (\Omega - \Omega_d)) \times R_d p_d.
\end{align*}
By Theorem \ref{thm:error_dyn}, the associated error dynamics are given by
\begin{subequations}\label{eq:so3_err_dynamics}
\begin{align}
    \dot{R}_E &= R_E \tilde{\Omega}_\psi^\times\\
    \dot{p}_E &= -\tilde{\Omega}_\psi \times p_E + \tilde{\tau}_\psi.
\end{align}
\end{subequations}

The inertia error map $\bar{\mathbb{I}}$ is given by $\bar{\mathbb{I}} \coloneqq R_d \mathbb{I} R_d^\top$, so that $p_E = \bar{\mathbb{I}} \tilde{\Omega}_\psi$.

\subsection{Lyapunov function and control}
%\tilde{\tau}_\psi = - k_v K(\tilde{U}_\psi) -\ad_{\Ad_{Q_d}U_d}^\ast P_E + K \left(\mathbb{P}_\gothg (Q_E^\top(I - Q_E)) \right),
Referring to the general control law \eqref{eq:control_law} and using the corresponding identities in $\R^3$ \eqref{eq:r3_ident} and \eqref{eq:so3_proj}, the control on $\R^3$ is given by
\begin{align}
    \tilde{\tau}_\psi = - k_v \tilde{\Omega}_\psi + (R_d \Omega_d) \times p_E  - \frac{k_p}{2} (R_E - R_E^\top)^\vee. \label{eq:so3_input}
\end{align}
There are some additional characterisations of the equilibria for this example, which we capture in the following proposition.
\begin{proposition}
    Let $e(t) \coloneqq (R_E(t), p_E(t))$ evolve according to \eqref{eq:so3_err_dynamics} with input given by \eqref{eq:so3_input}.
    The stable and unstable equilibria of $e$ are given by
    $\left\{ (R_E, p_E) : R_E = \Id, p_E = 0 \right\}$ and $\left\{ (R_E, p_E) : \tr(R_E)   = -1, p_E = 0 \right\}$, respectively.
\end{proposition}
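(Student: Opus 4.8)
The plan is to first pin down the equilibrium set explicitly, then use the Lyapunov function \eqref{eq:lyap} from Theorem~\ref{thm:stability} to separate the two families by energy level and read off their stability type.

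First I would characterise the equilibria. A constant $(R_E, p_E)$ is an equilibrium if and only if $\dot R_E = 0$ and $\dot p_E = 0$ hold identically. From $\dot R_E = R_E \tilde{\Omega}_\psi^\times$ and invertibility of $R_E$ one gets $\tilde{\Omega}_\psi = 0$, whence $p_E = \bar{\mathbb{I}} \tilde{\Omega}_\psi = 0$. Substituting $\tilde{\Omega}_\psi = 0$ and $p_E = 0$ into $\dot p_E$ with the control \eqref{eq:so3_input} collapses every term except the proportional one, leaving $0 = -\tfrac{k_p}{2}(R_E - R_E^\top)^\vee$, so $R_E = R_E^\top$. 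Note that the time-varying coupling term $(R_d \Omega_d)\times p_E$ drops out precisely because $p_E = 0$, so these equilibria are genuinely constant despite the time dependence of $\bar{\mathbb{I}}$ and $R_d$.

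Next I would identify the symmetric rotations. A symmetric $R_E \in \SO(3)$ satisfies $R_E^2 = R_E R_E^\top = \Id$, so its eigenvalues lie in $\{\pm 1\}$ with product $\det R_E = 1$; the only possibilities are $(1,1,1)$, giving $R_E = \Id$ with $\tr R_E = 3$, and $(1,-1,-1)$, giving $\tr R_E = -1$. Thus the equilibrium set is exactly $\{(\Id, 0)\} \cup \{(R_E, 0) : \tr R_E = -1\}$. Stability of $(\Id, 0)$ is already supplied by Theorem~\ref{thm:stability}, where it is the global minimum of $\calL$ and locally asymptotically stable.

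It remains to show the $\tr R_E = -1$ equilibria are unstable, which is the main obstacle, since the system is genuinely time-varying and a constant-coefficient linearisation is unavailable. I would avoid linearisation and argue via energy. Evaluating \eqref{eq:lyap} at an equilibrium, $p_E = 0$ and $\langle R_E - \Id, R_E - \Id\rangle = 2(3 - \tr R_E)$, so $\calL = k_p(3 - \tr R_E)$, which equals $4k_p$ on the $\tr R_E = -1$ family and $0$ at identity. Since $\tr R_E = -1$ is the global minimum of the trace on $\SO(3)$ (writing $R_E = \exp(\theta n^\times)$ gives $\tr R_E = 1 + 2\cos\theta$), every neighbourhood of such an equilibrium $(R_E^\ast, 0)$ contains points with $\tr R_E > -1$, obtained by a perturbation transverse to the codimension-one manifold of $\pi$-rotations; any such point has $\calL < 4k_p$. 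Because $\dot\calL = -k_v\lVert\tilde{\Omega}_\psi\rVert^2 \le 0$ along \eqref{eq:so3_err_dynamics}, the trajectory keeps $\calL < 4k_p$ for all time. The LaSalle--Barbalat analysis of Theorem~\ref{thm:stability} then forces convergence to the equilibrium set, but the only equilibria with energy strictly below $4k_p$ lie in $\{(\Id, 0)\}$, so $\tr R_E(t) \to 3$ and the trajectory leaves every fixed small neighbourhood of $(R_E^\ast, 0)$, which is exactly instability. The point needing care is verifying both that energy-lowering perturbations exist arbitrarily close to each $\tr R_E = -1$ equilibrium and that convergence cannot stall at a different $\pi$-rotation; both follow from the strict energy gap $4k_p$ separating the two families.
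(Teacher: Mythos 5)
Your proof is correct and reaches the same conclusion by the same overall strategy (characterise the equilibria as the symmetric rotations with $p_E=0$, invoke Theorem~\ref{thm:stability} for stability of $(\Id,0)$, and use the Lyapunov function to show the $\pi$-rotations are unstable), but the instability step is executed differently and, in one respect, more completely. The paper exhibits a specific curve $Q(s) = R\Lambda R^\top \exp(s(Re_1)^\times)$ through a $\pi$-rotation and performs a second-order Taylor expansion of $\calL$ to produce nearby points of strictly lower energy; you instead observe that $\tr R_E = 1+2\cos\theta$ attains its global minimum $-1$ exactly on the $\pi$-rotation set, so \emph{every} nearby rotation off that codimension-one set (with $p_E=0$) has $\calL < 4k_p$ --- a cleaner and coordinate-free way to get the same energy deficit, at the cost of not exhibiting an explicit escape direction. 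More importantly, the paper stops at ``nearby points of lower energy exist'' and asserts instability, whereas you supply the missing logical step: since $\dot{\calL}\le 0$, a trajectory started below the level $4k_p$ stays below it, the Barbalat analysis of Theorem~\ref{thm:stability} forces convergence to the equilibrium set, and the strict energy gap excludes every $\pi$-rotation as a limit, so the trajectory must migrate toward $(\Id,0)$ and exit any small neighbourhood of the perturbed equilibrium. That completion is genuinely needed --- a non-minimum of a non-increasing Lyapunov function is not automatically unstable without such a convergence argument --- so your write-up is, if anything, more rigorous than the paper's on this point. Your direct derivation of the equilibrium set from \eqref{eq:so3_err_dynamics} and \eqref{eq:so3_input} (rather than quoting Theorem~\ref{thm:stability}) and your eigenvalue argument for symmetric rotations are routine equivalents of the paper's steps.
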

\begin{proof}
    From Theorem \ref{thm:stability}, the set of equilibria are given by
    \begin{align*}
        \calE = \left\{ (R_E, p_E) : \frac{1}{2}(R_E - R_E^\top) = 0, p_E = 0 \right\}.
    \end{align*}
    Clearly, $(I, 0) \in \calE$.
    The stability of the point $(\Id, 0)$ follows from Theorem \ref{thm:stability}.
    The remaining equilibria correspond to the set of symmetric rotation matrices $R_E = R_E^\top$ (excluding identity), and so correspond to rotations about some axis with angle $\pi$.
    Thus, the remaining equilibria have eigenvalues $\{1, -1, 1\}$ (\cite{baker2000introduction}[page 31]) and so by the spectral theorem (\cite{hawkins1975cauchy}), the remaining equilibria correspond to the set of matrices given by
    \begin{align*}
        \left\{ R \Lambda R^\top: R \in \SO(3), \Lambda = \diag(1, -1, -1) \right\}.
    \end{align*}
    To show that this set is unstable, it suffices to show that for an arbitrary point in this set, all neighborhoods of this set contain another point for which the Lyapunov function is smaller.
    Consider such an arbitrary point $(R \Lambda R^\top, 0)$, consider the curve
    \begin{align*}
        \left\{Q(s) = R \Lambda R^\top \exp(s(R e_1)^\times) : s \in (-\varepsilon, \varepsilon), \varepsilon \in \R^+ \right\},
    \end{align*}
    and apply a second-order expansion of $\calL$ around $s$:
    \begin{align*}
        \calL(&Q(s), 0)\\
                         &= \tr((\Id - Q(s))^\top(\Id - Q(s)))\\
                         &= 2\tr(\Id -Q(s))\\
                         &= 2 \tr(\Id - R \Lambda R^\top \exp(s(R e_1)^\times))\\
                         &\approx 2 \tr(\Id - R \Lambda R^\top (\Id + s(R e_1)^\times + \frac{s^2}{2} (Re_1)^\times (R e_1)^\times))\\
                         &= 8 - \tr(2sR \Lambda R^\top (R e_1)^\times - s^2 R \Lambda R^\top (Re_1)^\times (R e_1)^\times)\\
                         &= 8 - s^2 \tr(\Lambda e_1^\times e_1^\times)\\
                         &= 8 - s^2 (e_1^\top \Lambda e_1 - \tr(\Lambda)) = 8 - 2s^2 \leq 8 = \calL(Q(s), 0).
    \end{align*}
    Therefore all points in this set are unstable.
\end{proof}

\subsection{Simulation}
\label{sec:simulation}
To verify the approach, we simulate the proposed $\SO(3)$ controller to track a sample trajectory.
The trajectory to track is defined by $R_{d_0} = \Id$, $\Omega_{d_0} = (0, 0, 0)$, $\tau_d(t) = (\cos(t), \sin(t), \sin(t)\cos(t))$.
The system trajectory is perturbed by setting the initial conditions to
\begin{align*}
   &R_0 = \exp((\pi - 0.1) (0.8, 0.6, 0.0)^\times), &\Omega_0 = (4, -3, 2).
\end{align*}
The simulation is run with a timestep of $0.01$ and the gains are chosen to be $k_v = 0.5$, $k_p = 1.0$.
In this simulation, the constraints $p = \mathbb{I} \Omega, p_d = \mathbb{I} \Omega_d$ are enforced in the construction of $\tilde{\tau}_\psi$, which requires $\Omega, \Omega_d$ to define the $\tilde{\Omega}_\psi$ term.
After $\tilde{\tau}_\psi$ is constructed, the physical input $\tau$ can be selected by the definition \eqref{eq:tilde_tau}, so that
\begin{align*}
    \tau = \tau_d + R_d^\top \tilde{\tau}_\psi + (\Omega - \Omega_d) \times p_d.
\end{align*}
We compute $\lVert R_E - \Id \rVert$ using the Frobenius norm ($\lVert R_E - \Id \rVert = \sqrt{\langle R_E - \Id, R_E - \Id \rangle}$) and $\lVert p_E \rVert$ using the standard $\R^3$ norm and display the results of this simulation in Figure \ref{fig:simulation}.

\begin{figure}[!tb]
    \includegraphics[width=0.7\linewidth]{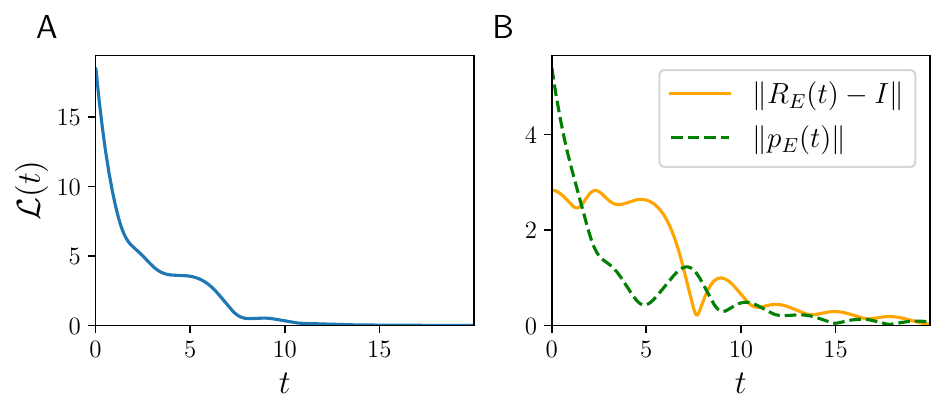}
    \centering
    \caption{A: The lyapunov function $\calL(t)$ vs time for the simulation described in \S \eqref{sec:simulation}. B: The magnitudes of the individial errors $\lVert R_E \rVert$ and $\lVert p_E \rVert$ vs time.}
    \label{fig:simulation}
\end{figure}

\section{Conclusion}

In this paper, we have approached the tracking task for fully-actuated mechanical systems whose configuration space is a matrix Lie group.
In this case, the system evolution is determined by the Euler-Poincare equations.
We have extended the the natural symmetry on the configuration space to the phase space and proposed an input action to show that the extended Euler-Poincare equations are equivariant with respect to this symmetry.
A key step in this approach was characterising the velocity state as an input.
We have shown that this choice of equivariant symmetry leads to error dynamics that also satisfy extended Euler-Poincare dynamics. 
We propose a simple PD type controller and prove the (local) asymptotic convergence of the error dynamics for this control scheme.
Finally, we have verified the effectiveness of this control approach in simulation for attitude control, where the control results in almost-global asymptotic stability of the error dynamics.
%%%%%%%%%%%%%%%%%%%%%%%%%%%%%%%%%%%%%%%%%%%%%%%%%%%%%%%%%%%%%%%%%%%%%%%%%%%%%%%%%%%%%%%%%%%%%%%%%%
\bibliographystyle{plainnat}
\bibliography{references}             % bib file to produce the bibliography
\section{Appendix}

\begin{proposition}
    \label{prop:I_bar_dot}
    Let the operator $\bar{\mathbb{I}}$ be defined as in \eqref{eq:I_bar}.
    The time derivative of $\bar{\mathbb{I}}^{-1}$ is given by
    \begin{align*}
        \frac{\td}{\td t} &\bar{\mathbb{I}}^{-1} = \ad_{\Ad_{Q_d}U_d} \bar{\mathbb{I}}^{-1}  + \bar{\mathbb{I}}^{-1} \ad_{\Ad_{Q_d}U_d}^\ast
    \end{align*}
\end{proposition}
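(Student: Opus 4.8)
The plan is to first rewrite $\bar{\mathbb{I}}^{-1}$ in a form where the time-dependence is isolated into adjoint maps of $Q_d$, and then differentiate term by term using the identities \eqref{eq:Ad_deriv} and \eqref{eq:Ad_star_deriv}. Since $\mathbb{I}$ is a fixed (time-independent) operator, inverting the definition \eqref{eq:I_bar} gives
\begin{align*}
    \bar{\mathbb{I}}^{-1} = \left(\Ad_{Q_d^{-1}}^\ast \mathbb{I} \Ad_{Q_d^{-1}}\right)^{-1} = \Ad_{Q_d^{-1}}^{-1} \mathbb{I}^{-1} \left(\Ad_{Q_d^{-1}}^\ast\right)^{-1} = \Ad_{Q_d} \mathbb{I}^{-1} \Ad_{Q_d}^\ast,
\end{align*}
where I use $\Ad_{Q_d^{-1}}^{-1} = \Ad_{Q_d}$ together with $\left(\Ad_{Q_d^{-1}}^\ast\right)^{-1} = \left(\Ad_{Q_d^{-1}}^{-1}\right)^\ast = \Ad_{Q_d}^\ast$.

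With this expression only the two outer factors depend on time. Applying the product rule to $\bar{\mathbb{I}}^{-1} = \Ad_{Q_d} \mathbb{I}^{-1} \Ad_{Q_d}^\ast$ and substituting the derivative identities --- specifically $\ddt \Ad_{Q_d} = \ad_{\Ad_{Q_d} U_d} \Ad_{Q_d}$ from \eqref{eq:Ad_deriv} and $\ddt \Ad_{Q_d}^\ast = \Ad_{Q_d}^\ast \ad_{\Ad_{Q_d} U_d}^\ast$ from \eqref{eq:Ad_star_deriv}, both evaluated along the trajectory $\dot{Q}_d = Q_d U_d$ --- yields
\begin{align*}
    \ddt \bar{\mathbb{I}}^{-1} = \ad_{\Ad_{Q_d} U_d} \Ad_{Q_d} \mathbb{I}^{-1} \Ad_{Q_d}^\ast + \Ad_{Q_d} \mathbb{I}^{-1} \Ad_{Q_d}^\ast \ad_{\Ad_{Q_d} U_d}^\ast.
\end{align*}
Recognising $\Ad_{Q_d} \mathbb{I}^{-1} \Ad_{Q_d}^\ast = \bar{\mathbb{I}}^{-1}$ in both terms then collapses the expression to the claimed identity.

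I do not anticipate a genuine obstacle here; the computation is routine once the inverse is written in the $\Ad_{Q_d} \mathbb{I}^{-1} \Ad_{Q_d}^\ast$ form. The only point requiring minor care is the bookkeeping on duals and inverses --- confirming that $\left(\Ad_{Q_d^{-1}}^\ast\right)^{-1} = \Ad_{Q_d}^\ast$, and selecting the correct one of the two equivalent forms offered by \eqref{eq:Ad_deriv} and \eqref{eq:Ad_star_deriv} so that the factor $\ad_{\Ad_{Q_d} U_d}$ (rather than the alternative $\ad_{U_d}$ form) lands on the outside of each term, matching the statement.
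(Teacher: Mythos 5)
Your proposal is correct and follows essentially the same route as the paper: rewrite $\bar{\mathbb{I}}^{-1} = \Ad_{Q_d} \mathbb{I}^{-1} \Ad_{Q_d}^\ast$, apply the product rule, and substitute the forms $\ddt \Ad_{Q_d} = \ad_{\Ad_{Q_d}U_d}\Ad_{Q_d}$ and $\ddt \Ad_{Q_d}^\ast = \Ad_{Q_d}^\ast \ad_{\Ad_{Q_d}U_d}^\ast$ from \eqref{eq:Ad_deriv} and \eqref{eq:Ad_star_deriv}. The only difference is that you spell out the dual/inverse bookkeeping for the inversion step, which the paper leaves implicit.
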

\begin{proof}
    By direct computation, we have
    \begin{align}
        \frac{\td}{\td t} &\bar{\mathbb{I}}^{-1} \notag\\
                &= (\frac{\td}{\td t} \Ad_{Q_d}) \mathbb{I}^{-1} \Ad_{Q_d}^\ast  + \Ad_{Q_d} \mathbb{I}^{-1} (\frac{\td}{\td t} \Ad_{Q_d}^\ast ) \label{eq:I_deriv_1}\\
                &=  \ad_{\Ad_{Q_d}U_d} \Ad_{Q_d} \mathbb{I}^{-1} \Ad_{Q_d}^\ast  \notag \\
                & \quad \quad + \Ad_{Q_d} \mathbb{I}^{-1}  \Ad_{Q_d}^\ast  \ad_{\Ad_{Q_d}U_d}^\ast  \label{eq:I_deriv_2}\\
                &=  \ad_{\Ad_{Q_d}U_d} \bar{\mathbb{I}}^{-1}  + \bar{\mathbb{I}}^{-1} \ad_{\Ad_{Q_d}U_d}^\ast  \label{eq:I_deriv_3},
    \end{align}
    where \eqref{eq:I_deriv_1} follows from the definition of $\bar{\mathbb{I}}$ and the product rule, \eqref{eq:I_deriv_2} follows from identities \eqref{eq:Ad_deriv} and \eqref{eq:Ad_star_deriv}, and \eqref{eq:I_deriv_3} follows from the definition of $\bar{\mathbb{I}}$.
\end{proof}
% with bibtex (preferred)

%\begin{thebibliography}{xx}  % you can also add the bibliography by hand

%\bibitem[Able(1956)]{Abl:56}
%B.C. Able.
%\newblock Nucleic acid content of microscope.
%\newblock \emph{Nature}, 135:\penalty0 7--9, 1956.

%\bibitem[Able et~al.(1954)Able, Tagg, and Rush]{AbTaRu:54}
%B.C. Able, R.A. Tagg, and M.~Rush.
%\newblock Enzyme-catalyzed cellular transanimations.
%\newblock In A.F. Round, editor, \emph{Advances in Enzymology}, volume~2, pages
%  125--247. Academic Press, New York, 3rd edition, 1954.

%\bibitem[Keohane(1958)]{Keo:58}
%R.~Keohane.
%\newblock \emph{Power and Interdependence: World Politics in Transitions}.
%\newblock Little, Brown \& Co., Boston, 1958.

%\bibitem[Powers(1985)]{Pow:85}
%T.~Powers.
%\newblock Is there a way out?
%\newblock \emph{Harpers}, pages 35--47, June 1985.

%\bibitem[Soukhanov(1992)]{Heritage:92}
%A.~H. Soukhanov, editor.
%\newblock \emph{{The American Heritage. Dictionary of the American Language}}.
%\newblock Houghton Mifflin Company, 1992.

%\end{thebibliography}
% in the appendices.

\end{document}